
\documentclass[conference]{IEEEtran}
\usepackage{amsmath}
\usepackage{amssymb}
\usepackage{stmaryrd}
\usepackage{epic,eepic}
\usepackage{theorem}
\usepackage{pifont}  %needed by dingautolist
\usepackage{euscript}
\usepackage{calc}
\usepackage[titles]{tocloft}
%\usepackage{showkeys}

%\usepackage[sectionbib]{natbib}
% Cross-reference package (Natural BiB)
%\usepackage{chapterbib} % Put References at the end of each chapter

\usepackage[usenames]{color}          % Need the color package
\usepackage{xcolor}

\usepackage{epsf}           % dessin
\usepackage{graphicx}       % dessin
\usepackage{epsfig}         % dessin
\usepackage{pstricks}
\usepackage{pstricks-add}
\usepackage{pst-plot}
\usepackage{dsfont}
\usepackage{pst-circ}
\usepackage{pst-math}

\theoremheaderfont{\color{black}\normalfont\bfseries}

\newtheorem{lemma}{Lemma}
\newtheorem{theorem}{Theorem}
\newtheorem{definition}{Definition}

\newtheorem{proposition}{Proposition}
\newtheorem{remark}{Remark}
\newtheorem{conjecture}{Conjecture}

\newcommand{\argmin}{\ensuremath{\operatorname{argmin}}}

\newcommand{\R}{\mathbb{R}}

\newcommand{\N}{\mathbb{N}}
\newcommand{\E}{\mathbb{E}}
\newcommand{\Q}{\mathbb{Q}}

\newcommand{\PP}{\mathcal{P}}
\newcommand{\QQ}{\mathcal{Q}}
\newcommand{\mc}{\mathcal}

\newcommand{\prob}{\mathbb{P}}
\newcommand{\supp}{\mathrm{supp}\,}
\newcommand{\inte}{\ensuremath{\operatorname{int}}}

\ifCLASSINFOpdf
  % \usepackage[pdftex]{graphicx}
  % declare the path(s) where your graphic files are
  % \graphicspath{{../pdf/}{../jpeg/}}
  % and their extensions so you won't have to specify these with
  % every instance of \includegraphics
  % \DeclareGraphicsExtensions{.pdf,.jpeg,.png}
\else
  % or other class option (dvipsone, dvipdf, if not using dvips). graphicx
  % will default to the driver specified in the system graphics.cfg if no
  % driver is specified.
  % \usepackage[dvips]{graphicx}
  % declare the path(s) where your graphic files are
  % \graphicspath{{../eps/}}
  % and their extensions so you won't have to specify these with
  % every instance of \includegraphics
  % \DeclareGraphicsExtensions{.eps}
\fi
\hyphenation{op-tical net-works semi-conduc-tor}

\begin{document}
%
% paper title
% Titles are generally capitalized except for words such as a, an, and, as,
% at, but, by, for, in, nor, of, on, or, the, to and up, which are usually
% not capitalized unless they are the first or last word of the title.
% Linebreaks \\ can be used within to get better formatting as desired.
% Do not put math or special symbols in the title.
\title{Point-to-Point Strategic Communication}

% author names and affiliations
% use a multiple column layout for up to three different
% affiliations
\author{\IEEEauthorblockN{Ma\"{e}l Le Treust}
\IEEEauthorblockA{ETIS UMR 8051, CY Cergy Paris Universit\'{e}, ENSEA, CNRS,\\
6, avenue du Ponceau, \\
95014 Cergy-Pontoise CEDEX, France\\
Email: mael.le-treust@ensea.fr}
\and
\IEEEauthorblockN{Tristan Tomala}
\IEEEauthorblockA{HEC Paris, GREGHEC UMR 2959\\
1 rue de la Lib\'{e}ration,\\
78351 Jouy-en-Josas CEDEX, France\\
Email: tomala@hec.fr}}

% conference papers do not typically use \thanks and this command
% is locked out in conference mode. If really needed, such as for
% the acknowledgment of grants, issue a \IEEEoverridecommandlockouts
% after \documentclass

% for over three affiliations, or if they all won't fit within the width
% of the page, use this alternative format:
% 
%\author{\IEEEauthorblockN{Michael Shell\IEEEauthorrefmark{1},
%Homer Simpson\IEEEauthorrefmark{2},
%James Kirk\IEEEauthorrefmark{3}, 
%Montgomery Scott\IEEEauthorrefmark{3} and
%Eldon Tyrell\IEEEauthorrefmark{4}}
%\IEEEauthorblockA{\IEEEauthorrefmark{1}School of Electrical and Computer Engineering\\
%Georgia Institute of Technology,
%Atlanta, Georgia 30332--0250\\ Email: see http://www.michaelshell.org/contact.html}
%\IEEEauthorblockA{\IEEEauthorrefmark{2}Twentieth Century Fox, Springfield, USA\\
%Email: homer@thesimpsons.com}
%\IEEEauthorblockA{\IEEEauthorrefmark{3}Starfleet Academy, San Francisco, California 96678-2391\\
%Telephone: (800) 555--1212, Fax: (888) 555--1212}
%\IEEEauthorblockA{\IEEEauthorrefmark{4}Tyrell Inc., 123 Replicant Street, Los Angeles, California 90210--4321}}

% use for special paper notices
%\IEEEspecialpapernotice{(Invited Paper)}

% make the title area
\maketitle

%Point-to-Point Strategic Communication
\begin{abstract}
We propose a strategic formulation for the joint source-channel coding problem in which the encoder and the decoder are endowed with distinct distortion functions. We provide the solutions in four different scenarios. First, we assume that the encoder and the decoder cooperate in order to achieve a certain pair of distortion values. Second, we suppose that the encoder commits to a strategy whereas the decoder implements a best response, as in the persuasion game where the encoder is the Stackelberg leader. Third, we consider that the decoder commits to a strategy, as in the mismatched rate-distortion problem or as in the mechanism design framework. Fourth, we investigate the cheap talk game in which the encoding and the decoding strategies form a Nash equilibrium.
\end{abstract}

\IEEEpeerreviewmaketitle

\section{Introduction}

Strategic communication takes place when an informed sender communicates with  a receiver that takes an action, given that the sender and the receiver optimize different metrics. This question was originally formulated in the game theory literature were the messages are costless and the communication is unrestricted. Crawford and Sobel \cite{CrawfordSobel1982StrategicInformation} investigate the Nash equilibrium of the cheap talk game, whereas Kamenica and Gentzkow \cite{KamenicaGentzkow11} introduce the Bayesian persuasion game in which the sender commits to an information disclosure policy, as the leader of the Stackelberg game. In a previous work \cite{LeTreustTomala19}, we characterize the solution of the Bayesian persuasion game when the communication channel is noisy.

\begin{figure}[!ht]
\begin{center}
\psset{xunit=0.9cm,yunit=0.9cm}
\begin{pspicture}(0,0.3)(8.5,1.3)
\pscircle(0,0.5){0.45}
\psframe(2,0)(3,1)
\pscircle(5,0.5){0.45}
\psframe(7,0)(8,1)
\psline[linewidth=1pt]{->}(0.5,0.5)(2,0.5)
\psline[linewidth=1pt]{->}(3,0.5)(4.5,0.5)
\psline[linewidth=1pt]{->}(5.5,0.5)(7,0.5)
\psline[linewidth=1pt]{->}(8,0.5)(9,0.5)
\rput[u](1,0.8){$U^{n}$}
\rput[u](3.75,0.8){$X^n$}
\rput[u](6.25,0.8){$Y^n$}
\rput[u](8.5,0.8){$V^n$}
\rput(0,0.5){$\PP_U$}
\rput(5,0.5){$\mc{T}_{Y|X}$}
\rput(2.5,0.5){$\sigma$}
\rput(7.5,0.5){$\tau$}
\rput(2.5,1.3){$d_{\textsf{e}}(u,v)$}
\rput(7.5,1.3){$d_{\textsf{d}}(u,v)$}
\end{pspicture}
\caption{The source is i.i.d. and the channel is memoryless. The encoder and the decoder have mismatched distortion functions $d_{\textsf{e}}(u,v) \neq d_{\textsf{d}}(u,v)$.} %The encoder and the decoder implement strategies $\sigma$ and $\tau$ in order to minimize mismatched distortion functions $d_{\textsf{e}}(u,v) \neq d_{\textsf{d}}(u,v)$. }
\label{fig:SystemModel}
\end{center}
\end{figure}
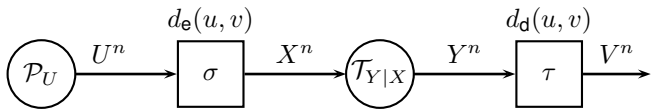

\vspace{-0.4cm}

The strategic communication problem has attracted attention in computer science \cite{dughmi2019persuasion}, in control theory \cite{SaritasYukselGezici2020}, in information theory \cite{AkyolLangbortBasarIEEE17}, \cite{LeTreustTomala(Allerton)16}, \cite{LeTreustTomala(IZS)18}, \cite{LeTreustTomala_IT19} and is related to the lossy source coding with mismatch distortion functions \cite{Lapidoth97}, \cite{ScarlettGuillenSomekhMartinez_FTCIT20}. Recently, Vora and Kulkarni investigate a strategic communication problem in which the receiver is the Stackelberg leader that should recover the source sequence \cite{VoraKulkarni_ISIT20}. The authors introduce the notion of the ``information extraction capacity'' and formulate an elegant solution in terms of the zero error capacity of ``the sender graph'' \cite{VoraKulkarni_Arxiv2020}.
 
In this paper, we compare four different solutions for the point-to-point strategic communication problem, and we characterize the set of Nash equilibrium distortions. 
%The main result of this paper, Theorem \ref{theo:DecoderCommitment}, characterizes the solution of a mechanism design problem, similar to the setting of Jackson and Sonnenschein \cite{JacksonSonnenschein07}, in which the channel noise  corrupts the transmitted signals.

%%%%%%%%%%%%%%%%%%%%%%%%%%%%%%%%%%%%%%%%%%
%%%%%%%%%%%%%%%%%%%%%%%%%%%%%%%%%%%%%%%%%%
%\newpage
\section{System model}

We denote by $\mc{U}$, $\mc{X}$, $\mc{Y}$, $\mc{V}$, the finite sets of information source, channel inputs, channel outputs and decoder's outputs. Uppercase letters $U^n=(U_1,\ldots,U_n)\in\mc{U}^n$ and $X^n$, $Y^n$, $V^n$ stand for $n$-length sequences of random variables with $n\in \mathbb{N}^{\star}=\mathbb{N}\setminus\{0\}$, whereas lowercase letters $u^n=(u_1,\ldots,u_n)\in\mc{U}^n$ and $x^n$, $y^n$, $v^n$, stand for sequences of realizations.  We denote by $\Delta(\mc{X})$ the set of  probability distributions $\QQ_X$ over $\mc{X}$, i.e. the probability simplex. 
%For a distribution $\QQ_{X}\in  \Delta(\mc{X})$, we write $\QQ(x)$ instead of $\QQ_X(x)$ for the probability value assigned to realization $x\in \mc{X}$. The notation $\QQ_{X}(\cdot|y)\in \Delta(\mc{X})$ denotes the conditional distribution of $X\in\mc{X}$ given the realization $y\in \mc{Y}$ and $\QQ_{X}^{\otimes n}\in \Delta(\mc{X}^n)$ denotes the i.i.d. distribution. 
% The support of a distribution $\QQ_{X}$ is denoted by $\supp \QQ_{X}=\{x\in\mc{X},\;\QQ(x)>0\}$ and $\inte \Delta(\mc{X})$ denotes the relative interior of the probability simplex.  The distance between two distributions $\QQ_X$ and $\PP_X$ is based on $L^1$ norm, denoted by $||\QQ_X - \PP_X||_{1}= \sum_{x\in\mc{X}} |\QQ(x) - \PP(x)|$. We denote by $D(\QQ_X||\PP_X)$ the Kullback-Leibler (K-L) divergence. The notation $U  -\!\!\!\!\minuso\!\!\!\!-X    -\!\!\!\!\minuso\!\!\!\!-  Y$ stands for the Markov chain property corresponding to $\PP_{Y|XU} = \PP_{Y|X}$. 
We consider an i.i.d. information source and a memoryless channel distributed according to $\PP_{U}\in \Delta(\mc{U})$ and $\mc{T}_{Y|X} : \mc{X} \to \Delta(\mc{Y})$, as depicted in Fig. \ref{fig:SystemModel}.

%%%%%%%%%%%%%%%%%%%
%%%%expressions
%%%%%%%%%%%%%%%%%%%
%probability mass function PMF
%
%probability kernel
%transition probability	-> 	Markov property
%conditional probablity
%stochastic mapping
%transition kernel

\begin{definition}\label{def:Code}
We define the encoding strategy $\sigma:\mc{U}^{n} \longrightarrow \Delta(\mc{X}^n)$ and the decoding strategy $\tau : \mc{Y}^n   \longrightarrow  \Delta( \mc{V}^n)$, and we denote by $\PP^{\sigma,\tau}$ the distribution defined by
\begin{align}
\PP^{\sigma,\tau}= \bigg(\prod_{t=1}^n\PP_{U_t} \bigg)\sigma_{X^n|U^n}
   \bigg(\prod_{t=1}^n \mc{T}_{Y_t|X_t} \bigg) \tau_{V^n|Y^n},
\end{align}
where $\sigma_{X^n|U^n}$, $\tau_{V^n|Y^n}$ denote the distributions of $\sigma$, $\tau$.
\end{definition}

%We assume that the encoder and the decoder minimize distincts distortion functions. 
\begin{definition}\label{def:Distortions} 
The encoder and decoder distortion functions $d_{\textsf{e}} : \mc{U} \times \mc{V} \longrightarrow \R$ and $d_{\textsf{d}} : \mc{U} \times \mc{V} \longrightarrow \R$ induce long-run distortion functions $d_{\textsf{e}}^{\,n}(\sigma,\tau)$ and $d_{\textsf{d}}^{\,n}(\sigma,\tau)$  defined by
%We denote by $d_{\textsf{e}}^{\,n}(\sigma,\tau)$ and $d_{\textsf{d}}^{\,n}(\sigma,\tau)$ the long-run distortion functions, evaluated with respect to $\PP^{\sigma,\tau}$ and defined by
\begin{align}
%d_{\textsf{e}}^{\,n}(\sigma,\tau) =& \E_{\sigma,\tau} \Bigg[ \frac{1}{n} \sum_{t=1}^n d_{\textsf{e}}(U_t,V_t) \Bigg] ,\\
d_{\textsf{d}}^{\,n}(\sigma,\tau) =& \sum_{u^n,v^n}\PP^{\sigma,\tau}\big(u^n,v^n \big) \cdot  \Bigg[  \frac{1}{n} \sum_{t=1}^n d_{\textsf{d}}(u_t,v_t) \Bigg].
\end{align}
\end{definition}

%%%%%%%%%%%%%%%%%%%%%%%%%%%%%%%%%%%%%%%%%%
%%%%%%%%%%%%%%%%%%%%%%%%%%%%%%%%%%%%%%%%%%
\section{Cooperative scenario}\label{sec:Cooperative}

\begin{definition}
The pair $(D_{\textsf{e}},D_{\textsf{d}})$ is achievable if 
\begin{align}
&\forall \varepsilon>0,  \;\; \exists \bar{n}\in \N^{\star}, \;\;\forall n\geq \bar{n}, \;\;\exists (\sigma,\tau) \\
&\text{ s.t. }\quad |D_{\textsf{e}} - d_{\textsf{e}}^{\,n}(\sigma, \tau)|+ |D_{\textsf{d}} - d_{\textsf{d}}^{\,n}(\sigma, \tau)|\leq \varepsilon
\end{align}
We denote by $\mc{C} $ the set of 	achievable pairs $(D_{\textsf{e}},D_{\textsf{d}})$.
\end{definition}
We define the set of distributions
\begin{align}
\Q_1 = \Big\{\PP_U\QQ_{V|U} \text{ s.t. } \max_{\PP_X} I( X; Y )  -   I( U ;V)   \geq 0\Big\}.
\end{align}
\begin{theorem}[Cooperative scenario]\label{theo:CooperativeScenario}
\begin{align}
\mc{C} =& \Big\{\big(\E_{\QQ}[d_{\textsf{e}}(U,V)], \E_{\QQ}[d_{\textsf{d}}(U,V)]\big) \quad \QQ\in\Q_1 \Big\}.
\end{align}
\end{theorem}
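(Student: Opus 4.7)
My plan is to prove Theorem 1 by reducing it to the classical joint source--channel coding theorem, applied simultaneously to both distortion functions. The key observation is that for any $(\sigma,\tau)$ the two averages $d_{\textsf{e}}^{\,n}(\sigma,\tau)$ and $d_{\textsf{d}}^{\,n}(\sigma,\tau)$ are linear functionals of the time-averaged joint distribution of $(U_t,V_t)$; so a single scheme whose empirical $(U^n,V^n)$ law converges to a target $\QQ\in\Q_1$ simultaneously controls both, and conversely any admissible scheme produces such an averaged law, which is forced to lie in $\Q_1$ by a standard mutual-information chain.

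For the achievability part I fix $\QQ = \PP_U\QQ_{V|U}\in\Q_1$ and begin with the strict case $I(U;V) < C := \max_{\PP_X} I(X;Y)$. I choose a rate $R$ with $I(U;V) < R < C$ and employ the separation architecture. A rate-$R$ random lossy source code built from the test channel $\QQ_{V|U}$ produces a codeword $\hat V^n$ for which $(U^n,\hat V^n)$ is strongly $\QQ$-jointly typical with probability close to one (classical covering argument). A rate-$R$ random channel code with the capacity-achieving input $\PP_X^\star$ then transmits the source index with vanishing error probability. Composing the two, the empirical joint distribution of $(U^n,V^n)$ lies within $\varepsilon$ of $\QQ$ with high probability; boundedness of $d_{\textsf{e}}$ and $d_{\textsf{d}}$ then yields $d_*^{\,n}(\sigma,\tau)\to\E_{\QQ}[d_*(U,V)]$ for $* \in \{\textsf{e},\textsf{d}\}$. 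Boundary distributions with $I(U;V) = C$ are obtained by a perturbation and continuity argument.

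For the converse, given any $(\sigma,\tau)$ I introduce a time-sharing variable $T$ uniform on $\{1,\dots,n\}$ independent of everything else, and set $\tilde U := U_T$, $\tilde V := V_T$. The joint law $\bar{\QQ}$ of $(\tilde U,\tilde V)$ has marginal $\bar{\QQ}_U = \PP_U$ since the source is i.i.d. The single-letterization
\begin{align}
n\, I(\tilde U;\tilde V) \leq \sum_{t=1}^n I(U_t;V_t) \leq I(U^n;V^n) \leq I(X^n;Y^n) \leq n\, C
\end{align}
uses the independence of the source letters (first two inequalities), the Markov chain $U^n \to X^n \to Y^n \to V^n$ (third) and the memoryless channel bound (fourth). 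Hence $\bar{\QQ}\in\Q_1$, and by construction $d_*^{\,n}(\sigma,\tau) = \E_{\bar{\QQ}}[d_*(U,V)]$ exactly, placing the achieved distortion pair in the claimed set. Compactness of $\Q_1$ and continuity of the two distortion functionals then propagate this to the $\varepsilon\to 0$ closure defining $\mc{C}$.

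The main obstacle I foresee is the boundary case $I(U;V) = C$ in the achievability part, where the separation scheme strictly requires a rate gap; it is resolved by continuity of $\QQ\mapsto I(U;V)$ together with linearity of $\QQ\mapsto\E_{\QQ}[d_*(U,V)]$ on the compact set $\Q_1$. The converse needs no Fano-type inequality since the statement concerns average distortions rather than vanishing error probabilities.
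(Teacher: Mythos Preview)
Your proposal is correct and follows precisely the route the paper indicates: it invokes Shannon's separation theorem (lossy source coding plus channel coding) for achievability and the standard single-letterization via a time-sharing variable for the converse, applied simultaneously to both distortion functions. The paper states only that Theorem~\ref{theo:CooperativeScenario} ``follows from Shannon's separation result \cite[Theorem 3.7]{ElGammalKim(book)11}, with two distortion functions,'' so you have simply spelled out the details of that same argument, including the boundary case $I(U;V)=C$ and the compactness of $\Q_1$ needed to close the converse.
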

The proof of Theorem \ref{theo:CooperativeScenario} follows from Shannon's separation result \cite[Theorem 3.7]{ElGammalKim(book)11}, with two distortion functions.

%%%%%%%%%%%%%%%%%%%%%%%%%%%%%%%%%%%%%%%%%%
%%%%%%%%%%%%%%%%%%%%%%%%%%%%%%%%%%%%%%%%%%
%\newpage
\section{Persuasion game: encoder commitment}
In this section, the encoder chooses first a strategy $\sigma$, and the decoder selects a best response strategy $\tau$ accordingly. This corresponds to the Bayesian persuasion game \cite{KamenicaGentzkow11}, where the encoder is the Stackelberg leader.
\begin{definition}\label{def:DecBR} Given $n\in\N^{\star}$, we define\\
1. the set of decoder best responses to strategy $\sigma$ by
\begin{align}
 \textsf{BR}_{\textsf{d}}(\sigma) =&\underset{\tau}{\argmin}\; d_{\textsf{d}}^{\,n}(\sigma, \tau),\end{align}
2. the long-run encoder distortion value by
\begin{align}
D^n_{\textsf{e}}=\inf_{\sigma}\max_{\tau \in \textsf{BR}_{\textsf{d}}(\sigma)} d_{\textsf{e}}^{\,n}(\sigma, \tau). \label{eq:PersuasionProblem}
\end{align}
\end{definition}

In case $\textsf{BR}_{\textsf{d}}(\sigma)$ is not a singleton, we assume that the decoder selects the worst strategy for the encoder distortion  $\max_{\tau \in \textsf{BR}_{\textsf{d}}(\sigma)} d_{\textsf{e}}^{\,n}(\sigma, \tau)$, so that the solution is robust to the exact specification of the decoding strategy. 

We aim at characterizing the asymptotic behavior of $D^n_{\textsf{e}}$.
\begin{definition}\label{def:CharacterizationPersuasion} 
We consider an auxiliary random variable $W\in \mc{W}$ with $|\mc{W}| = \min\big(|\mc{U}|+1, |\mc{V}|\big)$ and we define
\begin{align}
\Q_2 =& \Big\{  \PP_{U}  \QQ_{W|U}   \; \text{s.t.} \; \max_{\PP_X} I( X; Y )  -   I( U ;W)   \geq 0  \Big\}.\label{eq:SetQ0}
\end{align} 
Given $\QQ_{UW}$, we define the single-letter decoder best responses
\begin{align}
\Q_{\textsf{d}}\big(\QQ_{UW}\big) =& \underset{\QQ_{V|W}}{\argmin}\; \E_{\QQ_{UW}\atop  \QQ_{V|W}  } \Big[ d_{\textsf{d}}(U,V) \Big].\label{eq:SetQ2}
\end{align} 
The encoder optimal distortion $D_{\textsf{e}}^{\star} $ is given by 
\begin{align}
D_{\textsf{e}}^{\star} =&  \inf_{\QQ_{UW} \in \Q_2} \max_{\QQ_{V|W}  \in  \atop \Q_{\textsf{d}}(\QQ_{UW})} \E_{\QQ_{UW} \atop  \QQ_{V|W} } \Big[d_{\textsf{e}}(U,V)\Big].\label{eq:SolutionEncoder}
\end{align}
\end{definition}

 \begin{theorem}[Encoder commitment, Theorem 3.1 in \cite{LeTreustTomala19}]\label{theo:EncoderCommitment}
%The encoder long-run distortion satisfies:
\begin{align}
\forall n \in \N^{\star}, \qquad &D^n_{\textsf{e}}  \geq  D_{\textsf{e}}^{\star},\label{eq:Converse}\\
\forall \varepsilon>0, \;  \exists \bar{n}\in \N^{\star},\; \forall n\geq \bar{n}, \qquad &D^n_{\textsf{e}} \leq  D_{\textsf{e}}^{\star}  +  \varepsilon.\label{eq:Achievability}
\end{align}
\end{theorem}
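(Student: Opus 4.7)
The plan is to prove both inequalities by reducing the $n$-letter problem to the single-letter characterization in Definition \ref{def:CharacterizationPersuasion} via an auxiliary random variable $W$ that captures the decoder's effective information about the source.

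For the converse \eqref{eq:Converse}, I would fix an arbitrary encoding strategy $\sigma$ and let $\tau^{\star}\in\textsf{BR}_{\textsf{d}}(\sigma)$ be the worst-case best response maximizing the encoder distortion. Introduce a time-sharing variable $T$ uniform on $\{1,\ldots,n\}$ and independent of all other variables, and set $U=U_T$, $V=V_T$, $W=(Y^n,T)$. Three facts need to be checked. First, the information constraint defining $\Q_2$: using independence of the source and memorylessness of the channel, one computes
\begin{align}
I(U;W) = \frac{1}{n}\sum_{t=1}^n I(U_t;Y^n) \leq \frac{1}{n} I(U^n;Y^n) \leq \max_{\PP_X} I(X;Y),
\end{align}
so $\QQ_{UW}\in\Q_2$. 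Second, since $\tau^{\star}$ is a best response, each $V_t$ is almost surely a minimizer of $\E[d_{\textsf{d}}(U_t,V_t)\mid Y^n]$, which translates in single-letter form to $\QQ_{V|W}\in\Q_{\textsf{d}}(\QQ_{UW})$. Third, the induced expectation equals $d_{\textsf{e}}^{\,n}(\sigma,\tau^{\star}) = \E_{\QQ_{UWV}}[d_{\textsf{e}}(U,V)]$, which by the worst-case choice of $\tau^{\star}$ lower-bounds the inner maximum of \eqref{eq:SolutionEncoder}. Taking the infimum over $\sigma$ yields $D^n_{\textsf{e}}\geq D_{\textsf{e}}^{\star}$. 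A standard Carath\'eodory-type support lemma finally bounds $|\mc{W}|$ by $\min(|\mc{U}|+1,|\mc{V}|)$.

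For the achievability \eqref{eq:Achievability}, fix $\varepsilon>0$ and pick $\QQ^{\star}_{UW}\in\Q_2$ achieving $D_{\textsf{e}}^{\star}+\varepsilon/2$ in \eqref{eq:SolutionEncoder}, together with a single-letter decoder kernel $\QQ^{\star}_{V|W}$ that attains the inner maximum over $\Q_{\textsf{d}}(\QQ^{\star}_{UW})$. The encoder builds a random codebook of $W^n$-sequences drawn i.i.d.\ from $\QQ^{\star}_W$, selects the index of one jointly typical with the realized $U^n$, and transmits that index using a capacity-achieving channel code with input distribution $\PP_X^{\star}$ that attains $\max_{\PP_X} I(X;Y)$. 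The slack $\max_{\PP_X} I(X;Y)\geq I(U;W)$ guarantees reliable recovery of $W^n$ with vanishing error probability. Since the recovered $\hat W^n$ is asymptotically a sufficient statistic for the posterior of $U^n$ given $Y^n$, any decoder best response must coincide on typical sequences with the symbol-by-symbol application of $\QQ^{\star}_{V|W}$ to $\hat W_t$, so by the law of large numbers the encoder's average distortion concentrates around $\E_{\QQ^\star}[d_{\textsf{e}}(U,V)]\leq D_{\textsf{e}}^{\star}+\varepsilon/2$, with an additional $\varepsilon/2$ slack absorbing error events and typicality fluctuations.

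The main obstacle is controlling the encoder's worst-case over the decoder's best responses in the achievability: the decoder may a priori have several best responses, and the encoder needs all of them to yield approximately the intended distortion. I would handle this by breaking ties in favor of the encoder via a vanishing secondary payoff perturbation, or equivalently by restricting attention to $\QQ^{\star}_{UW}$ whose single-letter inner maximizer in \eqref{eq:SolutionEncoder} is essentially unique on the support of $W$ -- a density/continuity argument shows that such distributions form a dense subset of the $\varepsilon/2$-optimal set. The converse is technically simpler, but translating the $n$-letter best-response condition into its single-letter counterpart $\QQ_{V|W}\in\Q_{\textsf{d}}(\QQ_{UW})$, and verifying that the worst-case $\tau^{\star}$ induces the outer maximizer over $\Q_{\textsf{d}}(\QQ_{UW})$, requires careful bookkeeping of the conditional distributions.
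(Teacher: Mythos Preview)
The paper does not prove Theorem~\ref{theo:EncoderCommitment} at all: it merely records it as Theorem~3.1 of \cite{LeTreustTomala19} and notes that it is a particular case of \cite[Theorem~III.3]{LeTreustTomala_IT19} with no side information. So there is no in-paper proof to compare against; what you have written is essentially the argument one finds in those references.

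Your converse is the standard one and is correct. The identification $W=(Y^n,T)$ with $T$ uniform, together with the chain $I(U_T;Y^n,T)\le \tfrac1n I(U^n;Y^n)\le \max_{\PP_X}I(X;Y)$, places $\QQ_{UW}$ in $\Q_2$, and the separability of the long-run distortion gives the bijection between $n$-letter best responses $\tau\in\textsf{BR}_{\textsf{d}}(\sigma)$ and single-letter kernels $\QQ_{V|W}\in\Q_{\textsf{d}}(\QQ_{UW})$, so the inner maxima coincide exactly. (The paper, in the analogous converse for Theorem~\ref{theo:DecoderCommitment}, uses $W=(Y^{T-1},Y_{T+1}^n,T)$, but for the encoder-commitment converse your choice is the natural one.)

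For the achievability, your scheme and your identification of the main obstacle are both right, but one step is stated too loosely. The sentence ``the recovered $\hat W^n$ is asymptotically a sufficient statistic for the posterior of $U^n$ given $Y^n$, so any decoder best response must coincide on typical sequences with the symbol-by-symbol application of $\QQ^{\star}_{V|W}$'' is not a theorem you can invoke; it is exactly the content that requires work. What the cited proofs actually do is control, for \emph{every} $y^n$ in a high-probability set, the decoder's posterior $\PP^{\sigma}(\cdot\mid y^n)$ over $U^n$ and show that its one-dimensional marginals are close to $\QQ^{\star}_{U|W=\hat w_t}$; only then does Berge's maximum theorem (upper hemicontinuity of $\argmin$) force every $n$-letter best response to land near $\Q_{\textsf{d}}(\QQ^{\star}_{UW})$ in distortion. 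Your tie-breaking remedy via perturbation or restriction to essentially unique maximizers is not how the cited works proceed, and it is not obvious that it closes the gap: a vanishing perturbation of $d_{\textsf{d}}$ changes the decoder's problem but does not by itself control the encoder's distortion under the \emph{unperturbed} worst-case best response. If you want a self-contained achievability, replace that paragraph by the posterior-control argument rather than a density claim.
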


%Theorem \ref{theo:EncoderCommitment} is equivalent to \cite[Theorem 3.1]{LeTreustTomala19}, which consider the maximization of utility functions instead of the minimization of distortion functions. 
%We provide a sketch of proof in App. \ref{sec:ProofTheoEncoder}. 
Theorem \ref{theo:EncoderCommitment} is a particular case of \cite[Theorem III.3]{LeTreustTomala_IT19} when no side information is available at the decoder. Note that the sequence $(D^n_{\textsf{e}})_{n\in\N^{\star}}$ is sub-additive. Indeed, when $\sigma$ is the concatenation of several encoding strategies, the concatenation of the corresponding optimal decoding strategies still belongs to $ \textsf{BR}_{\textsf{d}}(\sigma)$. Theorem \ref{theo:EncoderCommitment} and Fekete's lemma, show that
\begin{align}
D_{\textsf{e}}^{\star} =& \lim_{n \to +\infty} D_{\textsf{e}}^n =\inf_{n  \in \N^{\star}} D_{\textsf{e}}^n .\label{eq:EncoderLimit}
\end{align}

\begin{remark}
The decoder long-run distortion $d_{\textsf{d}}^{\,n}(\sigma, \tau)$ obtained with $\sigma$ asymptotically optimal for \eqref{eq:PersuasionProblem} and $\tau \in \textsf{BR}_{\textsf{d}}(\sigma)$ converges to $\E_{\QQ_{UW} \atop  \QQ_{V|W} } \big[d_{\textsf{d}}(U,V)\big]$, where $\QQ_{V|W}\in \Q_{\textsf{d}}\big(\QQ_{UW}\big) $ and  $\QQ_{UW}$ is a limit of a minimizing sequence of \eqref{eq:SolutionEncoder}.
\end{remark}

%%%%%%%%%%%%%%%%%%%%%%%%%%%%%%%%%%%%%%%%%%
%%%%%%%%%%%%%%%%%%%%%%%%%%%%%%%%%%%%%%%%%%
\section{Mechanism design: decoder commitment}\label{sec:MD}
In this section, it is the decoder which chooses first a strategy $\tau$, and then the encoder selects a strategy $\sigma$ accordingly. This corresponds to the mismatched rate-distortion problem in information theory \cite{Lapidoth97}, \cite{ScarlettGuillenSomekhMartinez_FTCIT20}, and to the Mechanism design problem \cite{JacksonSonnenschein07} in game theory, where the decoder is the Stackelberg leader.

\begin{definition}\label{def:EncBR} Given $n\in\N^{\star}$, we define\\
1.  the set of encoder best responses to strategy $\tau$ by
\begin{align}
\textsf{BR}_{\textsf{e}}(\tau) =&\underset{\sigma}{\argmin}\; d_{\textsf{e}}^{\,n}(\sigma, \tau),
\end{align}
2. the long-run decoder distortion value by
\begin{align}
D^n_{\textsf{d}}=\inf_{\tau}\max_{\sigma  \in \textsf{BR}_{\textsf{e}}(\tau)} d_{\textsf{d}}^{\,n}(\sigma, \tau). \label{eq:MDProblem}
\end{align}
\end{definition}

The value $D^n_{\textsf{d}}$ corresponds to the best distortion the decoder can obtain for fixed $n\in\N^{\star}$. In case there are several  best responses, we assume the encoder selects the worst strategy $\sigma$ for the decoder distortion.

We aim at characterizing the asymptotic behaviour of $D^n_{\textsf{d}}$

%In case $\textsf{BR}_{\textsf{e}}(\tau)$ is not a singleton, we assume that the encoder selects the worst strategy for the decoder distortion  $\max_{\sigma  \in \textsf{BR}_{\textsf{e}}(\tau)} d_{\textsf{d}}^{\,n}(\sigma, \tau)$, so that the solution is robust to the exact specification of encoding strategy. 

%Note that the above defined sequence is sub-additive. Indeed, when $\tau$ is the concatenation of several decoding strategies, the concatenation of the corresponding optimal encoding strategies still belongs to $\textsf{BR}_{\textsf{e}}(\tau)$.

\begin{definition}\label{def:CharacterizationMD} 
Given an auxiliary random variable $W\in \mc{W}$ with $|\mc{W}| = \min\big(|\mc{U}|+1, |\mc{V}|\big)$ with distribution $\PP_{W}$, we define
\begin{align}
&\Q_3(\PP_{W}) =  \Big\{ \QQ_{UW}\in\Delta(\mc{U}\times \mc{W}) \; \text{s.t.} \;\QQ_{U} = \PP_{U},\nonumber\\
&\;\;\; \QQ_{W} = \PP_{W}\;\; \text{and} \;\; \max_{\PP_X} I( X; Y )  -   I( U ;W)   \geq 0\Big\}.
\end{align}
Given $\PP_{WV}$, we define the single-letter encoder best responses
\begin{align}
&\Q_{\textsf{e}}(\PP_{WV}) = \underset{\QQ_{UW}\in\Q_3(\PP_{W})}{\argmin}\;   \E_{\QQ_{UW} \atop  \PP_{V|W} } \Big[d_{\textsf{e}}(U,V)\Big].
\end{align}
The decoder optimal distortion $D_{\textsf{d}}^{\star} $ is given by 
\begin{align}
D_{\textsf{d}}^{\star} =&  \inf_{\PP_{WV}} \max_{\QQ_{UW}\in\Q_{\textsf{e}}(\PP_{WV})} \E_{\QQ_{UW} \atop  \PP_{V|W} } \Big[d_{\textsf{d}}(U,V)\Big].\label{eq:SolutionDecoder}
\end{align}
\end{definition}

In both \eqref{eq:SolutionEncoder} and \eqref{eq:SolutionDecoder}, it is the Stackelberg leader that selects the marginal distribution $\PP_{W}$,  whereas the incentive constraints affect the Stackelberg follower.  Furthermore, the encoder selects the distribution $\QQ_{UW} \in \Q_3(\PP_{W})$ that satisfies the information constraint and the decoder selects $\PP_{V|W}$.  
%Then, we could also write
%\begin{align}
%D_{\textsf{e}}^{\star} =&  \inf_{P_W}\inf_{\QQ_{UW} \in \Q_3(\PP_{W}) } \max_{\QQ_{V|W}\atop + \textsf{incentives}(\QQ_{UW})}  \E_{\QQ_{UW} \atop  \QQ_{V|W} } \Big[d_{\textsf{e}}(U,V)\Big],\label{eq:SolutionEncoder2}\\
%D_{\textsf{d}}^{\star} =&  \inf_{P_W} \inf_{\PP_{V|W}} \max_{\QQ_{UW}\in \Q_3(\PP_{W}) \atop + \textsf{incentives}(\PP_{WV})} \E_{\QQ_{UW} \atop  \PP_{V|W} } \Big[d_{\textsf{d}}(U,V)\Big].\label{eq:SolutionDecoder2}
%\end{align}

\begin{theorem}[Decoder commitment]\label{theo:DecoderCommitment}
\begin{align}
\forall n \in \N^{\star}, \qquad & D^n_{\textsf{d}} \geq    D_{\textsf{d}}^{\star},\label{eq:ConverseMD}\\
\forall \varepsilon>0,  \; \exists \bar{n}\in \N^{\star},\;\forall n\geq \bar{n}, \qquad &D^n_{\textsf{d}} \leq  D_{\textsf{d}}^{\star}  +  \varepsilon.\label{eq:AchievabilityMD}
\end{align}
\end{theorem}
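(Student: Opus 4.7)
My plan is to mirror the proof of Theorem \ref{theo:EncoderCommitment}, with the roles of encoder and decoder reversed. The achievability \eqref{eq:AchievabilityMD} requires constructing a decoder strategy that realizes the single-letter pair $(\PP_W,\PP_{V|W})$ attaining $D_{\textsf{d}}^{\star}+\varepsilon/2$ in \eqref{eq:SolutionDecoder} and showing that any encoder best response induces a single-letter joint close to some $\QQ_{UW}\in\Q_{\textsf{e}}(\PP_{WV})$. The converse \eqref{eq:ConverseMD} requires single-letterizing any pair $(\sigma,\tau)$ with $\sigma\in\textsf{BR}_{\textsf{e}}(\tau)$ and showing that the induced test channel belongs to $\Q_{\textsf{e}}(\PP_{WV})$.

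For the achievability, I fix $(\PP_W,\PP_{V|W})$ near-optimal in \eqref{eq:SolutionDecoder}, a capacity-achieving channel input $\PP_X^{\star}$, and a rate $R$ satisfying $\max_{\QQ_{UW}\in\Q_{\textsf{e}}(\PP_{WV})} I(U;W) < R < \max_{\PP_X} I(X;Y)$, perturbing $\PP_W$ slightly if the information constraint is tight. I then draw a codebook $\{(w^n(m),x^n(m))\}_{m=1}^{2^{nR}}$ i.i.d.\ from $\PP_W\otimes\PP_X^{\star}$ as part of the committed decoder strategy $\tau$: upon receiving $y^n$, $\tau$ decodes $\hat{m}$ by joint typicality of $(x^n(\hat{m}),y^n)$ under $\PP_X^{\star}\mc{T}_{Y|X}$, then outputs $v^n$ drawn from $\PP_{V|W}^{\otimes n}(\cdot|w^n(\hat{m}))$. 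Because the encoder is free to implement any $\QQ_{UW}\in\Q_3(\PP_W)$, using the covering lemma to select $\hat m$ such that $(u^n,w^n(\hat m))$ is $\QQ_{UW}$-typical and transmitting $x^n(\hat m)$ reliably via the channel coding theorem, every best response satisfies $d_{\textsf{e}}^{\,n}(\sigma,\tau)\le\min_{\QQ\in\Q_3(\PP_W)}\E_{\QQ\atop\PP_{V|W}}[d_{\textsf{e}}(U,V)]+o(1)$; pessimistic tie-breaking across $\Q_{\textsf{e}}(\PP_{WV})$ then yields a decoder distortion no greater than $D_{\textsf{d}}^{\star}+\varepsilon$.

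For the converse, I fix any decoder strategy $\tau$ and an adversarial best response $\sigma\in\textsf{BR}_{\textsf{e}}(\tau)$. Introducing a time-sharing variable $T$ uniform on $\{1,\ldots,n\}$ together with $U=U_T$, $V=V_T$ and an auxiliary $W=(Y^{T-1},V_{T+1}^{n},T)$, a standard Csisz\'ar-sum argument applied to $I(U^n;V^n)\le I(X^n;Y^n)$ yields the Markov chain $U-W-V$, preserves the marginal $\QQ_U=\PP_U$, and gives the information bound $I(U;W)\le\max_{\PP_X} I(X;Y)$, so that $\QQ_{UW}\in\Q_3(\PP_W)$ for the induced marginal $\PP_W=\QQ_W$. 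The hard part will be establishing $\QQ_{UW}\in\Q_{\textsf{e}}(\PP_{WV})$: if some $\QQ'_{UW}\in\Q_3(\PP_W)$ strictly improved the single-letter encoder distortion against $\PP_{V|W}$, I would apply the achievability construction from the encoder's perspective to build a strategy $\sigma'$ with $d_{\textsf{e}}^{\,n}(\sigma',\tau) < d_{\textsf{e}}^{\,n}(\sigma,\tau)$, contradicting $\sigma\in\textsf{BR}_{\textsf{e}}(\tau)$. This lifting is delicate because $\tau$ need not be codebook-based, so I will first reduce $\tau$ to its effective single-letter description $\PP_{V|W}$ and then realize arbitrary $\QQ'\in\Q_3(\PP_W)$ through this reduction; once granted, the pessimistic selection in \eqref{eq:MDProblem} gives $d_{\textsf{d}}^{\,n}(\sigma,\tau)\ge\max_{\QQ\in\Q_{\textsf{e}}(\PP_{WV})}\E_{\QQ\atop\PP_{V|W}}[d_{\textsf{d}}(U,V)]\ge D_{\textsf{d}}^{\star}$.
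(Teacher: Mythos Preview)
Your achievability argument has a genuine gap. You correctly use the covering lemma to show that the encoder \emph{can} realize any $\QQ_{UW}\in\Q_3(\PP_W)$, hence every best response satisfies $d_{\textsf{e}}^{\,n}(\sigma,\tau)\le\min_{\QQ\in\Q_3(\PP_W)}\E_{\QQ,\PP_{V|W}}[d_{\textsf{e}}]+o(1)$. But this inner bound on what the encoder can do tells you nothing about the \emph{decoder's} distortion under that best response. Your sentence ``pessimistic tie-breaking across $\Q_{\textsf{e}}(\PP_{WV})$ then yields\ldots'' silently assumes that every encoder best response induces an empirical joint on $(U,W)$ close to some element of $\Q_{\textsf{e}}(\PP_{WV})$; this is precisely what must be proved and it does not follow from covering. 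A priori, the encoder could attain its minimal distortion through a strategy whose empirical $(U,W)$-distribution has $I(U;W)$ far above $\textsf{R}$, and such a distribution could be arbitrarily bad for $d_{\textsf{d}}$.

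The paper closes this gap with a packing-type outer bound (Lemma~\ref{lemma:DifferentPacking0}, in the spirit of \cite[Step 1]{Lapidoth97} and \cite[Lemma 4.3]{ScarlettGuillenSomekhMartinez_FTCIT20}): since only $2^{n\textsf{R}}$ codewords $W^n(m)$ are drawn i.i.d.\ from $\PP_W$, the probability that \emph{any} of them is jointly typical with $U^n$ at a distribution with $I(U;W)>\textsf{R}+\eta$ vanishes. Hence, regardless of which $x^n$ the encoder sends, the average empirical distribution of $(U^n,W^n(\hat M))$ lies with high probability in $\Q_\delta^-(\textsf{R}+\eta)$. Combining this outer bound with your covering-based upper bound on the achievable encoder distortion forces the best-response empirical into $\Q_\delta(\textsf{R}+\eta,\textsf{D}+\mu)$, and Berge's maximum theorem then gives continuity back to $\Q_{\textsf{e}}(\PP_{WV})$. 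Your proposal is missing this ``the codebook is too sparse to permit high mutual information'' half of the sandwich.

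A smaller point on the converse: the paper identifies $W=(Y^{T-1},Y_{T+1}^n,T)$ rather than your $(Y^{T-1},V_{T+1}^n,T)$ and calls the remaining steps standard. Your proposed lifting argument---building an improved $\sigma'$ against an \emph{arbitrary} $\tau$ to certify $\QQ_{UW}\in\Q_{\textsf{e}}(\PP_{WV})$---is the right intuition, but as you yourself note, realizing an arbitrary $\QQ'_{UW}\in\Q_3(\PP_W)$ when $W$ is a function of $Y^n$ (which the encoder controls only through the channel) and $\tau$ is not codebook-based is not obviously possible; this step would need a concrete construction rather than the sketch you give.
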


The achievability proof of Theorem \ref{theo:DecoderCommitment} is provided in App. \ref{sec:ProofTheoDecoder}, and relies on similar arguments as in \cite[Step 1]{Lapidoth97} and \cite[Lemma 4.3]{ScarlettGuillenSomekhMartinez_FTCIT20}. The converse proof is based on standard arguments with the identification of the auxiliary random variable $W=(Y^{T-1},Y_{T+1}^n,T)$,  $T\in\{1,\ldots,n\}$. The sequence $(D^n_{\textsf{d}})_{n\in\N^{\star}}$ is sub-additive, thus Theorem \ref{theo:DecoderCommitment} and Fekete's lemma show that 
%\eqref{eq:MDProblem} converges to its infimum.
\begin{align}
D_{\textsf{d}}^{\star} =& \lim_{n \to +\infty} D^n_{\textsf{d}} = \inf_{n  \in \N^{\star}} D^n_{\textsf{d}}. \label{eq:DecoderLimit}
\end{align}

\begin{remark}
The encoder long-run distortion $d_{\textsf{e}}^{\,n}(\sigma, \tau)$ obtained with $\tau$ asymptotically optimal for \eqref{eq:MDProblem} and $\sigma \in \textsf{BR}_{\textsf{e}}(\tau)$ converges to $\E_{\QQ_{UW} \atop  \PP_{V|W} } \big[d_{\textsf{e}}(U,V)\big]$, where $\QQ_{UW}\in \Q_{\textsf{e}}\big(\PP_{WV}\big) $ and  $\PP_{WV}$ is a limit of a minimizing sequence of \eqref{eq:SolutionDecoder}.
\end{remark}

%%%%%%%%%%%%%%%%%%%%%%%%%%%%%%%%%%%%%%%%%%
%%%%%%%%%%%%%%%%%%%%%%%%%%%%%%%%%%%%%%%%%%
\section{Cheap talk game: no commitment}

%No-commitment
%Nash Equilibrium

\begin{definition}
Given $\varepsilon\geq0$ and $n\in \N^{\star}$, an $\varepsilon$-Nash equilibrium is a pair of strategies  $(\sigma,\tau)$ such that 
\begin{align}
&\sigma \in \textsf{BR}_{\textsf{e}}^{\,\varepsilon}(\tau)\quad \text{ and }\quad \tau \in \textsf{BR}^{\,\varepsilon}_{\textsf{d}}(\sigma)\quad \text{ where,}\\
%\end{align}
%where 
%\begin{align}
&\textsf{BR}_{\,\textsf{e}}^{\varepsilon}(\tau) =\Big\{\sigma,\;\; d_{\textsf{e}}^{\,n}(\sigma, \tau) \leq \min_{\tilde{\sigma}}d_{\textsf{e}}^{\,n}(\tilde{\sigma}, \tau) + \varepsilon\Big\},\\
&\textsf{BR}_{\,\textsf{d}}^{\varepsilon}(\sigma) =\Big\{\tau, \;\;d_{\textsf{d}}^{\,n}(\sigma, \tau) \leq \min_{\tilde{\tau}}d_{\textsf{d}}^{\,n}(\sigma, \tilde{\tau}) + \varepsilon\Big\}.
\end{align}
We denote by $\textsf{NE}_{\varepsilon}^{\,n}$ the set of distortion pairs $(D_{\textsf{e}}^{\varepsilon},D_{\textsf{d}}^{\varepsilon})$ for which there exists a $\varepsilon$-Nash equilibrium $(\sigma,\tau)$ such that 
\begin{align}
D_{\textsf{e}}^{\varepsilon} = d_{\textsf{e}}^{\,n}(\sigma, \tau)\quad \text{ and }\quad D_{\textsf{d}}^{\varepsilon} = d_{\textsf{d}}^{\,n}(\sigma, \tau).
\end{align}
We denote by $\textsf{NE}^{\,n}$ the set of $\textsf{NE}_{\varepsilon}^{\,n}$ with $\varepsilon=0$. 
\end{definition}

\begin{definition}
For $\varepsilon\geq0$, we define the set of distributions that are $\varepsilon$-best responses for both encoder and decoder.
\begin{align}
&\Q_4^{\varepsilon} = \Big\{ \QQ_{UWV} = \PP_{U}  \QQ_{W|U}  \QQ_{V|W} \quad \text{ s.t. }\nonumber\\
 &\qquad\quad \QQ_{UW}\in\Q_{\textsf{e}}^{\varepsilon}(\QQ_{WV}),\quad  \QQ_{V|W}  \in  \Q_{\textsf{d}}^{\varepsilon}(\QQ_{UW})\Big\},\\
%\end{align}
%where 
%\begin{align}
&\Q_{\textsf{e}}^{\varepsilon}(\QQ_{WV}) = \Big\{ \QQ_{UW}\in\Q_3(\QQ_{W})\;\text{ s.t. } \E_{\QQ_{UW} \atop  \QQ_{V|W} } \Big[d_{\textsf{e}}(U,V)\Big] \nonumber \\
&\qquad\qquad\quad\leq \min_{ \widetilde{\QQ}_{UW}\atop\in\Q_3(\QQ_{W})}\E_{\widetilde{\QQ}_{UW} \atop  \QQ_{V|W} } \Big[d_{\textsf{e}}(U,V)\Big]+\varepsilon \Big\},\\
&\Q_{\textsf{d}}^{\varepsilon}(\QQ_{UW})= \Big\{\QQ_{V|W} \text{ s.t. } \E_{\QQ_{UW} \atop  \QQ_{V|W} } \Big[d_{\textsf{d}}(U,V)\Big] \nonumber \\
&\qquad\qquad\quad\leq \min_{ \widetilde{\PP}_{V|W}} \E_{\QQ_{UW} \atop  \widetilde{\PP}_{V|W} } \Big[d_{\textsf{d}}(U,V)\Big]+\varepsilon \Big\}.
\end{align}
Then, we define
\begin{align}
\mc{N}^{\varepsilon} = \Big\{&\big(\E_{\QQ}[d_{\textsf{e}}(U,V)], \E_{\QQ}[d_{\textsf{d}}(U,V)]\big) \quad \QQ\in\Q_4^{\varepsilon} \Big\}.
\end{align}
We denote by $\mc{N}$ the set $\mc{N}^{\varepsilon}$ with $\varepsilon=0$.  
\end{definition}

\begin{theorem}[Nash equilibrium distortions]\label{theo:CheapTalkTheo}
\begin{align}
\forall \varepsilon\geq0, \;\forall n \in \N,  \quad &\textsf{NE}_{\varepsilon}^{\,n} \subset \mc{N}^{\varepsilon},\\
\lim_{\varepsilon \to 0}\lim_{n\to+\infty}&\textsf{NE}_{\varepsilon}^{\,n} = \mc{N}.
\end{align}
\end{theorem}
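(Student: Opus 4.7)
The approach is to prove the two set relations separately, mirroring the proofs of Theorems~\ref{theo:EncoderCommitment} and~\ref{theo:DecoderCommitment}, and then to combine them via a continuity argument in $\varepsilon$.

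For the converse inclusion $\textsf{NE}_\varepsilon^{\,n}\subset\mc{N}^\varepsilon$, I would single-letterize any $\varepsilon$-Nash equilibrium $(\sigma,\tau)$ by the same time-sharing identification used for Theorem~\ref{theo:DecoderCommitment}: introduce $T$ uniform on $\{1,\ldots,n\}$ independent of $(U^n,X^n,Y^n,V^n)$ and set $U=U_T$, $V=V_T$, $W=(Y^{T-1},Y_{T+1}^n,T)$. The induced law $\QQ_{UWV}$ satisfies $\QQ_U=\PP_U$, the distortion identities $\E_\QQ[d_i(U,V)]=d_i^{\,n}(\sigma,\tau)$ for $i\in\{\textsf{e},\textsf{d}\}$, and the information constraint $\max_{\PP_X}I(X;Y)\geq I(U;W)$ via the standard chain $nI(U_T;W)\leq I(U^n;Y^n)\leq I(X^n;Y^n)\leq n\max_{\PP_X}I(X;Y)$. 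I would then translate the two block-level $\varepsilon$-best-response conditions into single-letter ones. For the decoder, any single-letter kernel $\widetilde{\QQ}_{V|W}$ lifts to a block decoder $\tilde{\tau}$ that generates $V_t\sim\widetilde{\QQ}_{V|W}(\cdot\,|\,w_t(y^n))$, with block distortion equal to $\E_{\QQ_{UW}\widetilde{\QQ}_{V|W}}[d_\textsf{d}(U,V)]$; the block $\varepsilon$-BR property of $\tau$ then yields $\QQ_{V|W}\in\Q_\textsf{d}^\varepsilon(\QQ_{UW})$. For the encoder, I would reuse the converse of Theorem~\ref{theo:DecoderCommitment} with $\tau$ fixed: any block deviation $\tilde{\sigma}$ single-letterizes to some $\widetilde{\QQ}_{UW}\in\Q_3(\QQ_W)$ with matching distortion up to vanishing terms, which forces $\QQ_{UW}\in\Q_\textsf{e}^\varepsilon(\QQ_{WV})$.

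For the achievability inclusion, I would fix $\QQ_{UWV}\in\Q_4$ and construct block strategies $(\sigma^\star,\tau^\star)$ using the joint source-channel coding scheme behind the achievability of Theorem~\ref{theo:DecoderCommitment}: a random $W^n$-codebook drawn i.i.d.\ from $\QQ_W$ at a rate slightly exceeding $I(U;W)$, joint-typicality source encoding and decoding, channel coding using a capacity-achieving input distribution, and symbol-wise output generation through $\QQ_{V|W}^{\otimes n}$ after recovery of $\hat{W}^n$. Standard typicality arguments give block distortions within $o(1)$ of $(\E_\QQ[d_\textsf{e}],\E_\QQ[d_\textsf{d}])$. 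The two $\varepsilon$-BR conditions then follow: the decoder's block best response to $\sigma^\star$ reduces symbol by symbol to a best response against the posterior of $U_t$ given the decoded codeword, which matches $\Q_\textsf{d}(\QQ_{UW})$ up to vanishing error; and applying the converse of Theorem~\ref{theo:DecoderCommitment} to the fixed $\tau^\star$ lower-bounds the encoder's block value by $\min_{\widetilde{\QQ}_{UW}\in\Q_3(\QQ_W)}\E_{\widetilde{\QQ}_{UW}\QQ_{V|W}}[d_\textsf{e}]$, so that $\sigma^\star$ is $(\varepsilon+o(1))$-BR. Combining these with the achieved distortions shows that the realized pair lies in $\mc{N}^{\varepsilon+o(1)}$.

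The equality $\lim_{\varepsilon\to 0}\lim_{n\to\infty}\textsf{NE}_\varepsilon^{\,n}=\mc{N}$ then follows from the two inclusions together with upper hemicontinuity in $\varepsilon$ of the correspondences defining $\Q_4^\varepsilon$: since the expected distortions are continuous and the base constraint set $\Q_3(\QQ_W)$ is closed, any accumulation point of distributions in $\Q_4^{\varepsilon_k}$ with $\varepsilon_k\to 0$ lies in $\Q_4$, whence $\bigcap_{\varepsilon>0}\mc{N}^\varepsilon=\mc{N}$. I expect the main obstacle to be the encoder side in both directions: one must argue that the structurally relevant single-letter deviation set for a block encoder facing a fixed decoder is exactly $\Q_3(\QQ_W)$, which is where the converse argument of Theorem~\ref{theo:DecoderCommitment} does the heavy lifting, both when single-letterizing an arbitrary block encoder deviation inside the NE and when lower-bounding the value of the encoder's best response to the constructed $\tau^\star$.
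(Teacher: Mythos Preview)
Your plan is aligned with the paper's own treatment, which is essentially a one-sentence sketch: Theorem~\ref{theo:CheapTalkTheo} is stated to follow from Theorems~\ref{theo:EncoderCommitment} and~\ref{theo:DecoderCommitment}, with the only additional remark that Shannon's encoding and decoding form an $\varepsilon$-Nash equilibrium whenever the underlying distribution lies in $\Q_4$. Your achievability direction and the continuity step are in line with this.

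There is, however, a genuine gap in your converse argument on the encoder side. To obtain $\QQ_{UW}\in\Q_{\textsf{e}}^{\varepsilon}(\QQ_{WV})$ you need $\E_{\QQ}[d_{\textsf{e}}]\le \min_{\widetilde{\QQ}_{UW}\in\Q_3(\QQ_W)}\E_{\widetilde{\QQ}}[d_{\textsf{e}}]+\varepsilon$. Combined with the block $\varepsilon$-BR condition and the distortion identity, this requires $\min_{\sigma'}d_{\textsf{e}}^{\,n}(\sigma',\tau)\le \min_{\widetilde{\QQ}_{UW}\in\Q_3(\QQ_W)}\E_{\widetilde{\QQ}}[d_{\textsf{e}}]$, i.e.\ every single-letter deviation $\widetilde{\QQ}_{UW}\in\Q_3(\QQ_W)$ must be achieved (or beaten) by some block $\sigma'$. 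That is the \emph{lifting} direction, exactly what you correctly do for the decoder by turning $\widetilde{\QQ}_{V|W}$ into a block $\tilde{\tau}$. Your stated argument runs the other way (``any block $\sigma'$ single-letterizes to some $\widetilde{\QQ}_{UW}\in\Q_3(\QQ_W)$''), which only yields $\min_{\sigma'}d_{\textsf{e}}^{\,n}\ge \min_{\widetilde{\QQ}}\E[d_{\textsf{e}}]$ and does not close the implication; worse, the premise itself is false in general, since a deviating $\sigma'$ alters the law of $Y^n$ and hence the $W$-marginal, so its single-letterization need not land in $\Q_3(\QQ_W)$ at all.

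A related technical point: with $W=(Y^{T-1},Y_{T+1}^n,T)$ the Markov chain $U\text{--}W\text{--}V$ required by $\Q_4^{\varepsilon}$ fails, because $V_T$ still depends on $Y_T$ through $\tau$. Taking $W=(Y^n,T)$ restores the Markov chain and keeps $I(U;W)\le \max_{\PP_X}I(X;Y)$, but then the encoder-side lifting above becomes the real obstacle: for a generic $\tau$ you must produce, for each $\widetilde{\QQ}_{UW}\in\Q_3(\QQ_W)$, a block $\sigma'$ that preserves the $(Y^n,T)$-marginal while realizing $\widetilde{\QQ}_{U|W}$. This is not delivered by the converse of Theorem~\ref{theo:DecoderCommitment} (which concerns a specific Shannon-type $\tau^\star$) and needs its own argument.
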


Theorem \ref{theo:CheapTalkTheo} is a consequence of Theorems  \ref{theo:EncoderCommitment} and \ref{theo:DecoderCommitment}. If the distribution $\PP_{U}  \QQ_{W|U}  \QQ_{V|W}$ have marginals that belong to the sets $\Q_{\textsf{d}}\big(\QQ_{UW}\big)$ and $\Q_{\textsf{e}}(\PP_{WV})$, then Shannon's encoding and decoding schemes form an $\varepsilon$-Nash equilibrium.

%
% three ingredients:
%1) $\lim_{\varepsilon \to 0}\mc{N}^{\varepsilon} = \mc{N}$, 
%2) the converse arguments of Theorems \ref{theo:EncoderCommitment}
% and \ref{theo:DecoderCommitment} are also valid for $\varepsilon$-best responses,
%3)  the random coding generates $\varepsilon$-best responses.

\begin{conjecture}\label{theo:CheapTalk}
\begin{align}
%\lim_{n\to+\infty}\lim_{\varepsilon \to 0}\;\textsf{NE}_{\varepsilon}^{\,n} = \mc{N}.\\
\lim_{n\to+\infty}\lim_{\varepsilon \to 0}\;\textsf{NE}_{\varepsilon}^{\,n} = \mc{N}.
\end{align}
\end{conjecture}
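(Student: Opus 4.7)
One inclusion, namely $\limsup_{n} \textsf{NE}_{0}^{\,n} \subseteq \mc{N}$, follows at once from Theorem \ref{theo:CheapTalkTheo} applied with $\varepsilon=0$: since $\mc{N}^{0}=\mc{N}$, every exact finite-$n$ Nash equilibrium already has its distortion pair in $\mc{N}$. The substantive direction is the reverse inclusion: each $(D_{\textsf{e}}^{\star},D_{\textsf{d}}^{\star})\in\mc{N}$ must be the limit, as $n\to\infty$, of distortion pairs of \emph{exact} Nash equilibria of the finite-$n$ game. My plan is to start from the Shannon-based strategies $(\tilde\sigma_n,\tilde\tau_n)$ used to prove the achievability part of Theorem \ref{theo:CheapTalkTheo} for a witness $\QQ_{UWV}^{\star}\in\Q_{4}^{\,0}$, and to massage them into a genuine fixed point of the mutual best-response correspondence while keeping the distortion pair essentially unchanged.

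I would first replace $\tilde\tau_n$ by an exact decoder best response $\tau_n^{\diamond}\in\textsf{BR}_{\textsf{d}}(\tilde\sigma_n)$, which exists because the strategy simplex is compact and $d_{\textsf{d}}^{\,n}$ is continuous. Because $\tilde\tau_n$ is already an $\varepsilon_n$-best response with $\varepsilon_n\to0$, and the joint law $\PP^{\tilde\sigma_n,\tilde\tau_n}$ concentrates on the jointly typical set of $\QQ_{UWV}^{\star}$, the strategies $\tilde\tau_n$ and $\tau_n^{\diamond}$ can only disagree on a set of vanishing probability, so both distortions under $(\tilde\sigma_n,\tau_n^{\diamond})$ still converge to $(D_{\textsf{e}}^{\star},D_{\textsf{d}}^{\star})$. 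The analogous step on the encoder side produces $\sigma_n^{\diamond}\in\textsf{BR}_{\textsf{e}}(\tau_n^{\diamond})$ with the same approximation property.

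To upgrade $(\sigma_n^{\diamond},\tau_n^{\diamond})$ into a true equilibrium, I would apply Kakutani's theorem to the product correspondence $(\sigma,\tau)\mapsto\textsf{BR}_{\textsf{e}}(\tau)\times\textsf{BR}_{\textsf{d}}(\sigma)$, which is upper hemi-continuous with non-empty convex values on the compact convex strategy set, hence admits a fixed point. To localise this fixed point near the Shannon scheme, I would perturb the distortion criteria by a small strictly convex regulariser centred at $(\tilde\sigma_n,\tilde\tau_n)$: the perturbed game has unique continuous best responses and therefore a unique NE $(\sigma_n^{\eta},\tau_n^{\eta})$, which, as $\eta\to0$, converges along subsequences to an exact NE of the original game close to $(\tilde\sigma_n,\tilde\tau_n)$, and hence with distortions near $(D_{\textsf{e}}^{\star},D_{\textsf{d}}^{\star})$.

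The hard part, and the reason the result is stated only as a conjecture, is the stability of the Nash correspondence under this perturbation. The decoder's best-response structure can change discontinuously at distributions where $\Q_{\textsf{d}}(\QQ_{UW}^{\star})$ fails to be a singleton, and the Shannon codebooks do not \emph{a priori} control these ties at finite $n$; the resulting exact NE may then have distortion bounded away from the target. An alternative and potentially cleaner route would be a direct combinatorial construction in which the decoder is equipped with an explicit ``punishment'' rule that outputs the encoder's worst action on every $y^n$ lying outside a prescribed decoding region, and one shows that this rule belongs to $\textsf{BR}_{\textsf{d}}(\sigma_n)$ while simultaneously making the Shannon encoder itself an exact element of $\textsf{BR}_{\textsf{e}}(\tau_n)$, so that the pair lies in $\textsf{NE}_{0}^{\,n}$ by construction.
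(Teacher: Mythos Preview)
The paper does not prove this statement: it is explicitly labelled a \emph{conjecture} and left open. There is therefore no proof to compare against; one can only ask whether your outline closes the gap the authors left.

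It does not, and you essentially acknowledge this yourself. Two of your steps contain genuine holes. First, the claim that the exact decoder best response $\tau_n^{\diamond}$ and the Shannon decoder $\tilde\tau_n$ ``can only disagree on a set of vanishing probability'' is unjustified: $\tilde\tau_n$ being an $\varepsilon_n$-best response constrains the \emph{value} $d_{\textsf{d}}^{\,n}(\tilde\sigma_n,\tilde\tau_n)$, not the strategy itself, and on any $y^n$ where the posterior nearly ties between candidate $v^n$ the optimal output may differ from the Shannon one with no control on the probability of such $y^n$. Second, even granting that step, alternating once more to $\sigma_n^{\diamond}\in\textsf{BR}_{\textsf{e}}(\tau_n^{\diamond})$ does not yield an equilibrium, since $\tau_n^{\diamond}$ was a best response to $\tilde\sigma_n$, not to $\sigma_n^{\diamond}$. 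Your Kakutani-plus-perturbation route is supposed to repair this, but Kakutani only guarantees \emph{some} fixed point in the full strategy simplex, and the localisation argument needed to pin it near the Shannon scheme is exactly the stability statement you identify as missing; without it the exact equilibrium you extract may be a babbling equilibrium with distortions far from $(D_{\textsf{e}}^{\star},D_{\textsf{d}}^{\star})$.

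Your alternative punishment construction is closer in spirit to how such results are handled in repeated or cheap-talk games, but it is also incomplete here: because the channel $\mc{T}_{Y|X}$ is noisy, the ``out-of-region'' set of $y^n$ has strictly positive probability under the intended $\sigma_n$, so the decoder is \emph{not} free to play an arbitrary punishment there and still belong to $\textsf{BR}_{\textsf{d}}(\sigma_n)$. Making the punishment simultaneously credible for the decoder and deterrent for the encoder is precisely the difficulty that keeps the statement a conjecture.
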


%%%%%%%%%%%%%%%%%%%%%%%%%%%%%%%%%%%%%%%%%%
%%%%%%%%%%%%%%%%%%%%%%%%%%%%%%%%%%%%%%%%%%
%\newpage
\appendices

%%%%%%%%%%%%%%%%%%%%%%%%%%%%%%%%%%%%%%%%%%
%%%%%%%%%%%%%%%%%%%%%%%%%%%%%%%%%%%%%%%%%%
%\section{Proof of Theorem \ref{theo:CheapTalkTheo}}\label{sec:proofTheoCheapTalk}
%
%XXX
%%%%%%%%%%%%%%%%%%%%%%%%%%%%%%%%%%%%%%%%%%
%%%%%%%%%%%%%%%%%%%%%%%%%%%%%%%%%%%%%%%%%%
\section{Preliminary results}\label{sec:ProofPreliminaryResults}

%%%%%%%%%%%%%%%%%%%%%%%%%%%%%%%%%%%%%%%%%%%%%%%%%%%%%%%%%%%%%%%%%%%%%%%%%%%%%%%%%%%%%%%%%%%%%%%%%%%%%%%%%%%%%%%%%%%%%%%%%%%%%%%%%%%%%%%%%%%%%%%%%%%%%%%%%%%%%%%%%%%%%%%%%%%%%%%%

\begin{definition}
Given $\PP_{UW}\in\Delta(\mc{U}\times \mc{W})$, tolerance $\delta>0$, let
\begin{align}
B_{\delta}(\PP_{UW})  =& \Big\{ \QQ_{UW}  \text{ s.t. }  ||\QQ_{UW} - \PP_{UW}||_1 \leq \delta \Big\}.
\end{align}
%We denote by $Q(u^n,w^n)\in\Delta(\mc{U}\times \mc{W})$ the empirical distribution of the sequences $(u^n,w^n)\in \mc{U}^n\times \mc{W}^n$ of length $n\in\N^{\star}$.
We define the set of typical sequences by
\begin{align}
%T_{\delta}(\PP_{U})  =& \Big\{ u^n \in \mc{U}^n \text{ s.t. }  Q_U^{n} \in B_{\delta}(\PP_{U})\Big\},\\
T_{\delta}(\PP_{UW})  =& \Big\{ (u^n,w^n) \text{ s.t. }  Q_{UW}^{n} \in B_{\delta}(\PP_{UW})   \Big\},
\end{align}
where $Q_{UW}^{n}$ denotes the empirical distribution of $(u^n,w^n)$.
%\begin{align}
%T_{\delta}(\PP_{UW})  =& \Big\{ (u^n,w^n) \text{ s.t. }  ||Q_{UW}^n - \PP_{UW}||_1 \leq \delta \Big\},\\
%T_{\delta}(\PP_{U})  =& \Big\{ u^n \in \mc{U}^n \text{ s.t. }  ||Q_{U}^n - \PP_{U}||_1 \leq \delta \Big\},
%\end{align}
\end{definition}

\begin{definition}
We consider two distributions $\PP_{U}\in\Delta(\mc{U})$, $\PP_{W}\in\Delta(\mc{W})$,  a rate parameter $\textsf{R}\geq0$ and a tolerance $\delta\geq0$. We define the sets
\begin{align}
\Q_{\delta}^-(\textsf{R}) &=  \Big\{ \QQ_{UW}\in\Delta(\mc{U}\times \mc{W}) \; \text{s.t.} \; ||\QQ_{U} - \PP_{U}||_1\leq\delta,\nonumber\\
  &||\QQ_{W} - \PP_{W}||_1\leq\delta  \;\; \text{and } \; \;   I(U;W) \leq \textsf{R}\Big\},\\
\Q_{\delta}^+(\textsf{R}) &=  \Big\{ \QQ_{UW}\in\Delta(\mc{U}\times \mc{W}) \; \text{s.t.} \; ||\QQ_{U} - \PP_{U}||_1\leq\delta,\nonumber\\
  &||\QQ_{W} - \PP_{W}||_1\leq\delta   \; \;\text{and } \; \;  I(U;W) \geq \textsf{R}\Big\}.
\end{align}
We use the notation $\Q_{0}^-(\textsf{R})$ and $\Q_{0}^+(\textsf{R})$ when $\delta=0$. 
\end{definition}

%\cite[Step 1.]{Lapidoth97}
\begin{lemma}[see Step 1 in \cite{Lapidoth97} and Lemma 4.3 in \cite{ScarlettGuillenSomekhMartinez_FTCIT20}]\label{lemma:DifferentPacking0}
We consider two distributions $\PP_{U}\in\Delta(\mc{U})$ and $\PP_{W}\in\Delta(\mc{W})$, a rate $\textsf{R}\geq0$, a small $\eta>0$ and $n\in\N^{\star}$. 
\begin{itemize}
\item[$\bullet$] We generate a sequence $U^n$ according to $\PP_{U}^{\otimes n}$.
\item[$\bullet$] Independently, we generate a family of sequences $\big(W^n(m)\big)_{m\in\{1,\ldots,2^{n\textsf{R}}\}}$ according to $\PP_{W}^{\otimes n}$.
\end{itemize}
There exists $\bar{\delta}$, for all $\delta<\bar{\delta}$ and for all $\varepsilon>0$, there exists $\bar{n}$, for all $n\geq\bar{n}$,
\begin{align*}
\prob\bigg( \exists  m  \in \{1,\ldots,2^{n\textsf{R}}\},\quad Q^n_m \in \Q_{\delta}^+(\textsf{R}+\eta) \bigg)  \leq \varepsilon,  %\nonu\label{eq:DifferentPacking} 
\end{align*}
where $Q^n_m$ denotes the empirical distribution of $(U^n,W^n(m))$.
\end{lemma}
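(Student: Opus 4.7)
The plan is to combine a union bound over the $2^{n\textsf{R}}$ independently drawn codewords with a method-of-types estimate controlling, for each single pair $(U^n, W^n(m))$, the probability that its empirical joint distribution lies in $\Q_{\delta}^+(\textsf{R}+\eta)$. The key observation is that, since $U^n$ and $W^n(m)$ are drawn independently from the product measures $\PP_U^{\otimes n}$ and $\PP_W^{\otimes n}$, the probability of observing a given joint type $\QQ_{UW}$ with prescribed marginals decays exponentially in $n\cdot I_{\QQ}(U;W)$, and this exponent beats the $2^{n\textsf{R}}$ codewords with $\eta$ to spare.

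First, for fixed $m$, I would decompose the event $\{Q^n_m \in \Q_{\delta}^+(\textsf{R}+\eta)\}$ as a union over joint types $\QQ_{UW}$ on $\mc{U}\times\mc{W}$ satisfying the marginal closeness constraints $\|\QQ_U-\PP_U\|_1 \leq \delta$, $\|\QQ_W-\PP_W\|_1 \leq \delta$, and $I_{\QQ}(U;W) \geq \textsf{R}+\eta$. The standard method-of-types estimate yields
\begin{align}
\prob\bigl((U^n,W^n(m)) \in T(\QQ_{UW})\bigr) \leq 2^{-n \bigl( I_{\QQ}(U;W) + D(\QQ_U\|\PP_U) + D(\QQ_W\|\PP_W)\bigr)},
\end{align}
which follows from $|T(\QQ_{UW})| \leq 2^{nH_{\QQ}(UW)}$ together with the fact that any sequence in $T(\QQ_{UW})$ has product-measure probability $2^{-n(H_{\QQ}(U)+H_{\QQ}(W)+D(\QQ_U\|\PP_U)+D(\QQ_W\|\PP_W))}$. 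By non-negativity of the two divergences and the assumption $I_{\QQ}(U;W) \geq \textsf{R}+\eta$, the right-hand side is bounded above by $2^{-n(\textsf{R}+\eta)}$.

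Second, since the number of joint types on $\mc{U}\times\mc{W}$ is at most $(n+1)^{|\mc{U}||\mc{W}|}$, summing this bound over the relevant types and then union-bounding over $m\in\{1,\ldots,2^{n\textsf{R}}\}$ gives
\begin{align}
\prob\bigl( \exists m,\; Q^n_m \in \Q_{\delta}^+(\textsf{R}+\eta) \bigr) \leq 2^{n\textsf{R}}\cdot (n+1)^{|\mc{U}||\mc{W}|} \cdot 2^{-n(\textsf{R}+\eta)} = (n+1)^{|\mc{U}||\mc{W}|} \cdot 2^{-n\eta},
\end{align}
which tends to zero as $n\to +\infty$ uniformly in $\delta$. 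Choosing $\bar{n}$ large enough therefore drives this probability below any prescribed $\varepsilon>0$; the role of $\bar{\delta}$ is only to ensure that the marginal-closeness constraints are non-vacuous so that no joint type of interest is accidentally excluded by the estimate above.

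The step I expect to be most delicate, though entirely routine, is correctly tracking the divergence terms $D(\QQ_U\|\PP_U)$ and $D(\QQ_W\|\PP_W)$ in the type-class probability when the empirical marginals differ from the nominal ones but lie in their $\delta$-balls. Because these divergences are non-negative and only help the bound, the conclusion is insensitive to the precise value of $\delta$, which is exactly why the statement can afford to hold for every $\delta<\bar{\delta}$ rather than requiring $\delta\to 0$ jointly with $n$.
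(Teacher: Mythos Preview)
Your argument is correct, and it takes a genuinely different route from the paper's own proof. You decompose the bad event into \emph{exact} joint types and use the standard method-of-types identity
\[
\prob\bigl((U^n,W^n(m))\in T(\QQ_{UW})\bigr)\leq 2^{-n\bigl(I_{\QQ}(U;W)+D(\QQ_U\|\PP_U)+D(\QQ_W\|\PP_W)\bigr)},
\]
then absorb the polynomial number of types $(n+1)^{|\mc{U}||\mc{W}|}$ in the union bound. The paper instead covers $\Q_{\delta}^+(\textsf{R}+\eta)$ by a \emph{finite, $n$-independent} family of $\delta$-balls $T_{\delta}(\QQ_{UW}^{\tilde{k}})$ centered at interior distributions (its Lemma on covering the simplex), and then applies typical-sequence bounds; this introduces an exponent slack of the form $3\delta\log\frac{4(|\mc{U}\times\mc{W}|-1)}{\delta}$, which is precisely why the paper must choose $\bar{\delta}$ small enough that this slack is strictly less than $\eta$. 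Your approach trades an $n$-independent prefactor for a polynomial one, but in return gets a clean exponent $\eta$ with no $\delta$-dependence, so the conclusion holds for \emph{every} $\delta>0$; in particular, your remark that $\bar{\delta}$ plays no essential role in your argument is accurate. Both arguments are valid; yours is the more elementary and avoids the auxiliary covering lemma entirely.
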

The provide the proof of Lemma \ref{lemma:DifferentPacking0} in App. \ref{sec:LemmaProof}.

\begin{lemma}[Covering lemma, see Lemma 3.3 in \cite{ElGammalKim(book)11}]\label{lemma:covering0}
We consider a distribution $\PP_{UW}\in\Delta(\mc{U}\times \mc{W})$, a rate parameter $\textsf{R}= I(U;W)+\eta$ with $\eta>0$, $n\in\N$. 
\begin{itemize}
\item[$\bullet$] We generate a sequence $U^n$ according to $\PP_{U}^{\otimes n}$.
\item[$\bullet$] Independently, we generate a family of sequences $\big(W^n(m)\big)_{m\in\{1,\ldots,2^{n\textsf{R}}\}}$ according to $\PP_{W}^{\otimes n}$.
\end{itemize}
There exists $\bar{\delta}>0$, for all $\delta<\bar{\delta}$ and for all $\varepsilon>0$, there exists $\bar{n}$, such that for all $n\geq\bar{n}$,
\begin{align*}
\prob\bigg( \exists  m  \in \{1,\ldots,2^{n\textsf{R}}\},\quad ||Q^n_m -  \PP_{UW}||_1\leq\delta \bigg)  \geq 1 - \varepsilon.
\end{align*}
\end{lemma}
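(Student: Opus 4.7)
The plan is the standard joint-typicality argument for covering, adapted to the $L^1$-ball formulation of typicality used in the paper. First, since $U^n$ is drawn i.i.d. from $\PP_U$, the weak law of large numbers guarantees that the empirical distribution of $U^n$ lies within $L^1$-distance $\delta'$ of $\PP_U$ with probability at least $1-\varepsilon/2$ for $n$ large enough, where $\delta' > 0$ can be chosen freely. Conditioning on this event, it suffices to control the covering probability given a fixed typical realization $u^n$.

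Second, for any such typical $u^n$ and any index $m$, I would lower bound the probability that the independently drawn $W^n(m) \sim \PP_W^{\otimes n}$ produces a joint empirical distribution within $L^1$-distance $\delta$ of $\PP_{UW}$. By classical method-of-types estimates (the heart of the proof of Lemma 3.3 in \cite{ElGammalKim(book)11}), the number of sequences $w^n$ with $\|Q^n_{u^n,w^n} - \PP_{UW}\|_1 \leq \delta$ is of order $2^{n(H(W|U) - g_1(\delta))}$, while each such sequence carries $\PP_W^{\otimes n}$-mass of order $2^{-n(H(W) + g_2(\delta))}$, yielding
\begin{align*}
\prob\Big(\|Q^n_m - \PP_{UW}\|_1 \leq \delta \,\Big|\, U^n = u^n\Big) \geq 2^{-n(I(U;W) + f(\delta))},
\end{align*}
for a nonnegative function $f$ with $f(\delta) \to 0$ as $\delta \to 0$.

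Third, by the mutual independence of $\{W^n(m)\}_{m}$ and their independence from $U^n$, the conditional probability that \emph{no} index $m$ satisfies the $L^1$ bound is at most
\begin{align*}
\Big(1 - 2^{-n(I(U;W) + f(\delta))}\Big)^{2^{n\textsf{R}}} \leq \exp\!\Big(-2^{n(\eta - f(\delta))}\Big),
\end{align*}
where I used $\textsf{R} = I(U;W) + \eta$ and the inequality $(1-x)^k \leq e^{-kx}$. Choosing $\bar\delta$ small enough that $f(\delta) < \eta/2$ for every $\delta < \bar\delta$, this upper bound decays doubly exponentially in $n$ and hence falls below $\varepsilon/2$ for $n$ sufficiently large. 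A union bound with the typicality event of the first step then gives the claim with total error at most $\varepsilon$.

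The only genuinely technical step is the sharp conditional lower bound in the second paragraph; however, since the $L^1$-typicality definition used here is equivalent, up to reparameterization of $\delta$, to robust/strong typicality, the standard argument from \cite{ElGammalKim(book)11} transfers with only cosmetic changes, and no new idea beyond the classical covering lemma is needed.
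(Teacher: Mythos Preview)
The paper does not give its own proof of this lemma: it is stated as a citation of Lemma~3.3 in \cite{ElGammalKim(book)11} and used without argument. Your sketch is precisely the standard covering-lemma proof from that reference (typicality of $U^n$, the single-codeword lower bound $2^{-n(I(U;W)+f(\delta))}$ via method-of-types counting, and the $(1-x)^k\le e^{-kx}$ independence step), so there is nothing to compare and your approach is exactly what the paper defers to.
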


\begin{definition}
For $\PP_{U}\in\Delta(\mc{U})$, $\PP_{W}\in\Delta(\mc{W})$,  $\delta>0$, $\textsf{R}\geq0$, and $\textsf{D}\geq0$ we define
\begin{align}
&\Q_{\delta}(\textsf{R},\textsf{D}) =\Big\{ \QQ_{UW}\in\Delta(\mc{U}\times \mc{W}) \; \text{s.t.} \; ||\QQ_{U} - \PP_{U}||_1\leq\delta,\nonumber\\
&||\QQ_{W} - \PP_{W}||_1\leq\delta,\; 
  I(U;W) \leq \textsf{R},\;  \E \Big[d_{\textsf{e}}(U,V)\Big]\leq \textsf{D}\Big\}.
\end{align}
We have $\Q_{\delta}(\textsf{R},\textsf{D}) = \Q_{\delta}^-(\textsf{R})\cap \Q_{\delta}^{\circ}(\textsf{D}) $ with
\begin{align}
\Q_{\delta}^{\circ}(\textsf{D}) &=  \Big\{ \QQ_{UW}\in\Delta(\mc{U}\times \mc{W}) \; \text{s.t.} \; ||\QQ_{U} - \PP_{U}||_1\leq\delta,\nonumber\\
  &||\QQ_{W} - \PP_{W}||_1\leq\delta  \;\; \text{and } \; \;    \E \Big[d_{\textsf{e}}(U,V)\Big]\leq \textsf{D}\Big\}.
\end{align}
\end{definition}

%%%%%%%%%%%%%%%%%%%%%%%%%%%%%%%%%%%%%%%%%%
%%%%%%%%%%%%%%%%%%%%%%%%%%%%%%%%%%%%%%%%%%
\section{Achievability proof of Theorem \ref{theo:DecoderCommitment}}\label{sec:ProofTheoDecoder}

If the channel capacity is equal to zero, then a trivial coding scheme satisfies \eqref{eq:AchievabilityMD}. From now on, we assume that the channel capacity is strictly positive, therefore for all $\varepsilon_0>0$ there exists $\eta_0>0$ and a distribution $\PP_{WV}$ such that 
\begin{align}
\Big|D_{\textsf{d}}^{\star} -   \max_{\QQ_{UW}\in\Q^{\eta_0}_{\textsf{e}}(\PP_{WV})} \E_{\QQ_{UW} \atop  \PP_{V|W} } \Big[d_{\textsf{d}}(U,V)\Big]\Big|\leq\varepsilon_0,\label{eq:SolutionDecoder001}
\end{align}
where
\begin{align}
\Q^{\eta_0}_{\textsf{e}}(\PP_{WV}) =&\underset{\QQ_{UW}\in\Q_3^{\eta_0}(\PP_{W})}{\argmin}\;   \E_{\QQ_{UW} \atop  \PP_{V|W} } \Big[d_{\textsf{e}}(U,V)\Big],\\
\Q_3^{\eta_0}(\PP_{W}) =&  \Big\{ \QQ_{UW}\in\Delta(\mc{U}\times \mc{W}) \; \text{s.t.} \;\QQ_{U} = \PP_{U},\nonumber\\
 \QQ_{W} = \PP_{W}&\;\; \text{and} \;\; \max_{\PP_X} I( X; Y )  -   I( U ;W)   \geq 2\eta_0 \Big\}.
\end{align}

We use the notation $\QQ_{UW}$ to refer to the distribution that achieves the maximum in \eqref{eq:SolutionDecoder001}, and without loss of generality, we assume that $I( U ;W)= \max_{\PP_X} I( X; Y ) - 2\eta_0$. We introduce the rate parameter $\textsf{R}= I( U ;W)+\eta_0 $ and the tolerance of the typical sequences $\delta>0$. We consider that the decoder implements Shannon's channel decoding and lossy source decoding, see \cite[Sec. 3.1 and 3.6]{ElGammalKim(book)11}, that we denote by $\tau^{\star}$. We denote by $M$ and $m$ the  index selected by the encoder, whereas $\hat{M}$ and $\hat{m}$ refer to the  index selected by the decoder.
\begin{itemize}
\item[$\bullet$] The random codebooks $(W^n(m),X^n(m))_{m\in\{1,\ldots,2^{n\textsf{R} }\}}$ are drawn independently according to $\PP^{\otimes n}_W $ and $\PP^{\otimes n}_X $, where $\PP_X$ maximizes the channel capacity. 
\item[$\bullet$] The decoder observes the sequence of channel output $Y^n\in\mc{Y}^n$ and returns the unique index $\hat{m}$ such that the sequences  $\big(Y^n,X^n(\hat{m})\big) \in T_{\delta}(\PP_{X}\mc{T}_{Y|X})$ are jointly typical. Otherwise it returns the index $1$.
\item[$\bullet$] Then the decoder returns the sequence $W^n(\hat{m})$ corresponding to $\hat{m}$ and draws $V^n$ i.i.d. according to $\PP_{V|W}$.
\end{itemize}
Standard channel coding arguments ensures that 
\begin{align}
\exists \bar{\delta}_1,\forall \delta<\bar{\delta}_1,\forall \varepsilon_1>0,\exists \bar{n}_1\in \N^{\star},\forall n\geq \bar{n}_1, 
\;\prob(\hat{M}\neq M)\leq \varepsilon_1.\label{eq:ErrorPacking}
%\sum_{m\in\mc{M}}\prob(M=m)\prob(\hat{M}\neq m|M=m)\leq \varepsilon.
\end{align}
%\begin{remark}
%We remove the worst half of the codewords of $(X^n(m))_{m\in\{1,\ldots,2^{n\textsf{R} }\}}$ and we denote by $\mc{M}'$ the set of remaining indices. Then we have $|\mc{M}'|=2^{n(\textsf{R}-\frac{1}{n})}$ and
%\begin{align}
%\max_{m\in\mc{M}'}\prob(\hat{M}\neq m|M=m)\leq 2\varepsilon_1.
%\end{align}
%\end{remark}

Since the encoder is strategic, it selects a best response $\sigma \in \textsf{BR}_{\textsf{e}}(\tau^{\star})$ that, for a given $u^n$, returns $x^{n}$ in order to minimize
\begin{align}
& \sum_{y^n,v^n\atop \hat{m}}\mc{T}(y^n|x^n)\prob(\hat{m}|y^n)\PP^{\otimes n}(v^n|w^n(\hat{m}))\frac{1}{n} \sum_{t=1}^n d_{\textsf{e}}(u_t,v_t)\nonumber\\
 &=\sum_{\hat{m}}\prob(\hat{m}|x^n) \cdot \sum_{u,w} Q^n_{\hat{m}}(u,w)\sum_{v}
 \PP(v|w) d_{\textsf{e}}(u,v),\label{eq:OptimalXn}
 \end{align}
where $Q^n_{\hat{m}}\in\Delta(\mc{U}\times \mc{W})$ denotes the empirical distribution of $(u^n,w^n(\hat{m}))$. We denote by $x^{n\star}$ the sequence that minimizes \eqref{eq:OptimalXn} and we denote by 
\begin{align}
Q^{x^{n}} = \sum_{\hat{m}}\prob(\hat{m}|x^n) \cdot Q^n_{\hat{m}}\in\Delta(\mc{U}\times \mc{W}),\label{eq:AverageEmpiricalDistribution}
 \end{align}
the average empirical distribution induced by the input sequence $x^{n}$. By Lemma \ref{lemma:DifferentPacking0}, for all $\eta_2>0$, there exists $\bar{\delta}_2$, for all $\delta<\bar{\delta}_2$ and for all $\varepsilon_2>0$, there exists $\bar{n}_2$, for all $n\geq\bar{n}_2$,
\begin{align}
&\prob\bigg(Q^{X^{n\star}}\!\!\! \!  \notin \Q_{\delta}^-(\textsf{R}+\eta_2) \bigg) \leq \prob\bigg(Q^{X^{n\star}} \!\!\! \!\!\! \in \Q_{\delta}^+(\textsf{R}+\eta_2) \bigg)  \\
&+ \prob\bigg(||Q_U^{X^{n\star}} - \PP_U||_1+||Q_W^{X^{n\star}} - \PP_W||_1>\delta\bigg)  
\label{eq:DifferentPacking00} \\
\leq&\prob\bigg( \exists  x^n  \in \mc{X}^n,\quad Q^{x^{n}}  \in \Q_{\delta}^+(\textsf{R}+\eta_2) \bigg) + \varepsilon_2 \label{eq:DifferentPacking01} \\
\leq&\prob\bigg( \exists  m  \in \{1,\ldots,2^{n\textsf{R}}\},\; Q^n_m \in \Q_{\delta}^+(\textsf{R}+\eta_2) \bigg)  + \varepsilon_2\\
 \leq& 2\varepsilon_2.  %\nonu\label{eq:DifferentPacking} 
\end{align}

On the other hand, we assume that the encoder implements Shannon's coding scheme $\sigma_{\textsf{c}}$, by selecting the unique $m$ such that $(U^n,W^n(m))\in T_{\delta}(\QQ_{UW})$, and $1$ otherwise. By Lemma \ref{lemma:covering0}, there exists $\bar{\delta}_3>0$, for all $\delta<\bar{\delta}_3$ and for all $\varepsilon_3>0$, there exists $\bar{n}_3$, such that for all $n\geq\bar{n}_3$,
\begin{align}
\prob\bigg( \forall  m  \in \{1,\ldots,2^{n\textsf{R}}\},\quad ||Q^n_m -  \QQ_{UW}||_1>\delta \bigg)  \leq  \varepsilon_3.\label{eq:ErrorCovering}
\end{align}
The bounds given in \eqref{eq:ErrorPacking}, \eqref{eq:ErrorCovering} imply
\begin{align}
1 - \varepsilon_1- \varepsilon_3& \leq \prob\Big(Q^{X^{n}(m)}  \in  \Q_{\delta}^{\circ}(\textsf{D}+\mu) \Big)\\
&\leq \prob\Big(Q^{X^{n\star}}  \in  \Q_{\delta}^{\circ}(\textsf{D}+\mu) \Big),
\end{align}
with $\textsf{D}= \min_{\QQ_{UW}\in\Q_3^{\eta_0}(\PP_{W})} \E\big[d_{\textsf{e}}(U,V)\big]$ and $\mu= \delta  \overline{d_{\textsf{e}}}$ where $\overline{d_{\textsf{e}}}=\max_{u,v}d_{\textsf{e}}(u,v)$. Thus for all $\delta\leq\min(\bar{\delta}_1,\bar{\delta}_2,\bar{\delta}_3)$ and $n\geq\max(\bar{n}_1,\bar{n}_2,\bar{n}_3)$ we have
\begin{align}
&\prob\Big(Q^{X^{n\star}}\in \Q_{\delta}(\textsf{R}+\eta_2,\textsf{D}+\mu)\Big)\\
\geq&1 - \prob\Big(Q^{X^{n\star}}\notin \Q_{\delta}^-(\textsf{R}+\eta_2) \Big)-\prob\Big(Q^{X^{n\star}}\notin  \Q_{\delta}^{\circ}(\textsf{D}+\mu) \Big)\\
\geq&1-\varepsilon_1-2\varepsilon_2-\varepsilon_3.\label{eq:BoundProbaSet}
\end{align}
This shows the existence of a strategy $\tau^{\star}$ with codebook  $(w^n(m),x^n(m))_{m\in\{1,\ldots,2^{n\textsf{R} }\}}$ such that \eqref{eq:BoundProbaSet} is satisfied. We consider $\sigma \in \textsf{BR}_{\textsf{e}}(\tau^{\star})$ that achieves the maximum in \eqref{eq:MDProblem} and we denote $\overline{d_{\textsf{d}}}=\max_{u,v}d_{\textsf{d}}(u,v)$. Form Berge's Maximum Theorem the correspondance $(\delta,\textsf{R},\textsf{D}) \mapsto {\Q}_{\delta}(\textsf{R},\textsf{D})$ is continuous, and therefore
\begin{align}
&d_{\textsf{d}}^n(\sigma,\tau^{\star}) =    \E_{Q^{X^{n\star}}\atop  \PP_{V|W} } \Big[d_{\textsf{d}}(U,V)\Big] \\
\leq& \sup_{\PP_{UW}\in\atop\Q_{\delta}(\textsf{R}+\eta_2,\textsf{D}+\mu)}\E_{\PP_{UW}\atop  \PP_{V|W} } \Big[d_{\textsf{d}}(U,V)\Big] + (\varepsilon_1+2\varepsilon_2+\varepsilon_3)\overline{d_{\textsf{d}}}\\
\leq& \sup_{\PP_{UW}\in\atop\Q(\textsf{R}-\eta_0,\textsf{D})}\E_{\PP_{UW}\atop  \PP_{V|W} } \Big[d_{\textsf{d}}(U,V)\Big] + (\varepsilon_1+2\varepsilon_2+\varepsilon_3+\varepsilon_4)\overline{d_{\textsf{d}}}\\
=& \max_{\PP_{UW}\in\atop
\Q^{\eta_0}_{\textsf{e}}(\PP_{WV})} \E_{\PP_{UW} \atop  \PP_{V|W} } \Big[d_{\textsf{d}}(U,V)\Big] + (\varepsilon_1+2\varepsilon_2+\varepsilon_3+\varepsilon_4)\overline{d_{\textsf{d}}}\\
\leq&D_{\textsf{d}}^{\star} +\varepsilon_0 + (\varepsilon_1+2\varepsilon_2+\varepsilon_3+\varepsilon_4)\overline{d_{\textsf{d}}}.
\end{align}

We take $\varepsilon_0$, $\varepsilon_1$, $\varepsilon_2$, $\varepsilon_3$, $\varepsilon_4$, $\delta$, $\eta_2$, $\eta_0$ small and $n\in\N^{\star}$ large and the achievability result of Theorem \ref{theo:DecoderCommitment} follows. 

%%%%%%%%%%%%%%%%%%%%%%
\section{Proof of Lemma \ref{lemma:DifferentPacking0}}\label{sec:LemmaProof}

%\begin{proof}[Lemma \ref{lemma:DifferentPacking0}]
Lemma \ref{lemma:CoveringSetJointDistributions} below ensures for all $\delta>0$, there exists a family of distributions $(\QQ_{UW}^k)_{k\in \mc{K}} \subset \inte\Delta(\mc{U}\times \mc{W})$ with $|\mc{K}|<+\infty$ such that% \eqref{eq:LemmaCoverSimplex1} and \eqref{eq:LemmaCoverSimplex2}.
\begin{align}
\Delta(\mc{U}\times \mc{W}) &\subset \bigcup_{k\in\mc{K}} T_{\delta}(\QQ_{UW}^k),\label{eq:LemmaCoverSimplex1W}\\
\min_{k\in \mc{K}}\min_{(u,w)\in \mc{U}\times\mc{W}} \QQ^k(u,w)&\geq \frac{\delta}{4(|\mc{U}\times \mc{W}|-1)}.\label{eq:LemmaCoverSimplex2W}
\end{align}
Thus for all $\delta>0$, there exists a family of distributions $(\QQ_{UW}^{\tilde{k}})_{\tilde{k}\in \widetilde{\mc{K}}}  \subset\Q^+_{\delta}(\textsf{R}+\eta)   \cap\inte\Delta(\mc{U}\times \mc{W}) $ with $|\widetilde{\mc{K}}|<+\infty$ such that \eqref{eq:LemmaCoverSimplex2W} is satisfied and
\begin{align}
 \Q^+_{\delta}(\textsf{R}+\eta) \subset \bigcup_{\tilde{k}\in \widetilde{\mc{K}}} T_{\delta}(\QQ_{UW}^{\tilde{k}}).\label{eq:LemmaCoverSimplex3W}
\end{align}
We choose $\delta<\bar{\delta}$ such that $3 \bar{\delta} \log\frac{4(|\mc{U}\times \mc{W}|-1)}{\bar{\delta}}<\eta$. 
\begin{align}
&\prob\bigg( \exists  m  \in \{1,\ldots,2^{n\textsf{R}}\} \; \text{ s.t. }\; Q^n_m \in \Q^+_{\delta}(\textsf{R}+\eta) \bigg)   \label{eq:DifferentPackingProof1}\\ 
\leq &\prob\bigg( \exists  m   \; \text{ s.t. }\; Q^n_m \in  \bigcup_{\tilde{k}\in \widetilde{\mc{K}}} T_{\delta}(\QQ_{UW}^{\tilde{k}}) \bigg)   \label{eq:DifferentPackingProof1}\\ 
= &\prob\bigg(\exists  \tilde{k}\in \widetilde{\mc{K}},  \exists  m    \; \text{ s.t. }\; Q^n_m \in  T_{\delta}(\QQ_{UW}^{\tilde{k}}) \bigg)   \label{eq:DifferentPackingProof2}\\ 
\leq &\sum_{\tilde{k}\in \widetilde{\mc{K}} }\sum_{m\in\{1,\ldots,2^{n\textsf{R}}\}}\sum_{(u^n,w^n)\in \atop T_{\delta}(\QQ_{UW}^{\tilde{k}})}\PP_{U}^{\otimes n}(u^n)\PP_{W}^{\otimes n}(w^n) \label{eq:DifferentPackingProof3}\\ 
\leq &|\widetilde{\mc{K}} | \cdot  2^{n(\textsf{R} - I(U;W) + 3 \delta \log\frac{4(|\mc{U}\times \mc{W}|-1)}{\delta})}\label{eq:DifferentPackingProof4}\\ 
\leq &|\widetilde{\mc{K}} | \cdot 2^{-n(\eta- 3 \delta \log\frac{4(|\mc{U}\times \mc{W}|-1)}{\delta} )}\label{eq:DifferentPackingProof5}.
\end{align}
Equation \eqref{eq:DifferentPackingProof1} comes from \eqref{eq:LemmaCoverSimplex3W}. Equation \eqref{eq:DifferentPackingProof4} comes from \eqref{eq:LemmaCoverSimplex2W} with $ \min_{u,w} \QQ^{\tilde{k}}(u,w) \geq \frac{\delta}{4(|\mc{U}\times \mc{W}|-1)}$, and Proposition \ref{prop:TypicalSequences} and \ref{prop:SizeSetTypicalSequences} below. 
Equation \eqref{eq:DifferentPackingProof5} comes from $\QQ_{UW}^{\tilde{k}}\in  \Q^+_{\delta}(\textsf{R}+\eta) $, that induce $\textsf{R}\leq I(U;W)-\eta$.

Since $|\widetilde{\mc{K}} |<+\infty$ and $\eta-3 \delta \log\frac{4(|\mc{U}\times \mc{W}|-1)}{\delta}>0$, we choose $n$ large such that $|\widetilde{\mc{K}} | \cdot 2^{-n(\eta- 3 \delta \log\frac{4(|\mc{U}\times \mc{W}|-1)}{\delta} )}\leq \varepsilon$. This concludes the proof of Lemma \ref{lemma:DifferentPacking0}.
%\end{proof}

\begin{proposition}[see 1. pp. 27 in \cite{ElGammalKim(book)11}]\label{prop:TypicalSequences}
We consider $\PP_{U}\in\Delta(\mc{U})$, $n\in\N$, $\delta>0$. For all $u^n \in T_{\delta}(\PP_{U})$ we have
\begin{align}
2^{-n(H(U)+\delta_1)} \leq\PP_{U}^{\otimes n}(u^n) \leq 2^{-n(H(U)-\delta_1)},
\end{align}
with $\delta_1 =  \log\frac{1}{ \min\limits_{u\in \supp\PP_{U}} \PP(u)} \cdot  \delta$.
\end{proposition}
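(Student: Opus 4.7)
The plan is to reduce the claim to a statement about the empirical distribution $Q^n$ of $u^n$: I would rewrite $\PP_U^{\otimes n}(u^n)$ in a form where the exponent is $-n$ times a linear functional of $Q^n$, compare that functional to $H(U)$, and invoke the TV-typicality bound $\|Q^n - \PP_U\|_1 \leq \delta$ to control the difference.

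I would begin by factoring the product distribution according to the type of $u^n$. Since we only care about $u^n$ with $\PP_U^{\otimes n}(u^n)>0$ (the upper bound is otherwise trivial and the lower bound must be read with the convention $\log 0 = -\infty$), we may assume $Q^n(u)=0$ for $u\notin\supp\PP_U$. Then
\begin{align}
\PP_U^{\otimes n}(u^n) = \prod_{t=1}^n \PP(u_t) = \prod_{u\in\supp\PP_U} \PP(u)^{nQ^n(u)},
\end{align}
and taking $-\frac{1}{n}\log_2$ yields the empirical-entropy expression
\begin{align}
-\frac{1}{n}\log \PP_U^{\otimes n}(u^n) = \sum_{u\in\supp\PP_U} Q^n(u)\,\log\frac{1}{\PP(u)}.
\end{align}

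Next I would subtract $H(U)=\sum_{u}\PP(u)\log\frac{1}{\PP(u)}$, apply the triangle inequality, and factor out the pointwise bound $\log\frac{1}{\PP(u)} \leq \log\frac{1}{\min_{u\in\supp\PP_U}\PP(u)}$:
\begin{align}
\left| -\tfrac{1}{n}\log \PP_U^{\otimes n}(u^n) - H(U) \right|
&\leq \sum_{u\in\supp\PP_U} |Q^n(u)-\PP(u)|\cdot\log\tfrac{1}{\PP(u)} \\
&\leq \log\tfrac{1}{\min_{u\in\supp\PP_U}\PP(u)}\cdot\|Q^n-\PP_U\|_1 \leq \delta_1.
\end{align}
Exponentiating this two-sided inequality gives the sandwich $2^{-n(H(U)+\delta_1)}\leq \PP_U^{\otimes n}(u^n)\leq 2^{-n(H(U)-\delta_1)}$ stated in the proposition.

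No serious obstacle is expected; the argument is a direct application of the definition of TV-typicality combined with the elementary identity relating a product of probabilities to the empirical distribution. The only mildly delicate point is the treatment of letters outside $\supp\PP_U$, which is why the constant $\delta_1$ is defined with the minimum restricted to the support, and why the lower bound is to be interpreted as vacuous whenever $\PP_U^{\otimes n}(u^n)=0$.
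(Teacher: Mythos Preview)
Your argument is correct and is precisely the standard derivation: rewrite $\PP_U^{\otimes n}(u^n)$ via the empirical type, take logs to obtain the cross-entropy $\sum_u Q^n(u)\log\tfrac{1}{\PP(u)}$, and bound its deviation from $H(U)$ by $\|Q^n-\PP_U\|_1\cdot\log\tfrac{1}{\min_{u\in\supp\PP_U}\PP(u)}\leq\delta_1$. The paper does not supply its own proof of this proposition---it simply cites the textbook \cite{ElGammalKim(book)11}, whose argument is exactly the one you wrote; your handling of symbols outside $\supp\PP_U$ is also the usual caveat in that setting.
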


\begin{proposition}[see 2. pp. 27 in \cite{ElGammalKim(book)11}]\label{prop:SizeSetTypicalSequences}
We consider $\PP_{UW}\in\Delta(\mc{U}\times \mc{W})$, $n\in\N$, $\delta>0$. Then
%\begin{align}
$
\big|T_{\delta}(\PP_{UW})\big| \leq 2^{n(H(U,W)+\delta_2)}$
%\end{align}
with $\delta_2 =  \log\frac{1}{ \min\limits_{(u,w)\in \supp\PP_{UW}} \PP(u,w)} \cdot  \delta$.
\end{proposition}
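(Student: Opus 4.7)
The plan is to derive the cardinality bound from the per-sequence probability lower bound of Proposition \ref{prop:TypicalSequences} by the classical ``total mass is at most one'' argument, applied to the joint random variable $(U,W)$ viewed as a single random variable on the product alphabet $\mc{U}\times\mc{W}$.

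First, I would invoke Proposition \ref{prop:TypicalSequences} with the base alphabet taken to be $\mc{U}\times\mc{W}$ and the base distribution taken to be $\PP_{UW}$. The hypothesis of the proposition only requires that the distribution lie in $\Delta(\cdot)$, so nothing prevents this substitution. This gives, for every $(u^n,w^n)\in T_{\delta}(\PP_{UW})$, the lower bound
\begin{align}
\PP_{UW}^{\otimes n}(u^n,w^n) \;\geq\; 2^{-n(H(U,W)+\delta_2)},
\end{align}
with $\delta_2 = \log\bigl(1/\min_{(u,w)\in\supp\PP_{UW}}\PP(u,w)\bigr)\cdot \delta$, exactly the constant stated in Proposition \ref{prop:SizeSetTypicalSequences}.

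Second, I would use that the probabilities of disjoint events under a probability measure sum to at most one, restricting the sum to the typical set:
\begin{align}
1 \;\geq\; \sum_{(u^n,w^n)\in T_{\delta}(\PP_{UW})} \PP_{UW}^{\otimes n}(u^n,w^n) \;\geq\; \bigl|T_{\delta}(\PP_{UW})\bigr|\cdot 2^{-n(H(U,W)+\delta_2)},
\end{align}
where the second inequality substitutes the per-sequence bound from Step 1 on each term of the sum. Rearranging yields $|T_{\delta}(\PP_{UW})|\leq 2^{n(H(U,W)+\delta_2)}$, which is the claim.

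There is no real obstacle here; the argument is a standard one-line application of a probability bound followed by summation. The only point worth verifying carefully is that the same $\delta_2$ appears in both propositions, which is ensured by the fact that the probability lower bound of Proposition \ref{prop:TypicalSequences} is applied to the joint alphabet, so the ``$\min\PP$'' in its constant is precisely $\min_{(u,w)\in\supp\PP_{UW}}\PP_{UW}(u,w)$, matching the constant in Proposition \ref{prop:SizeSetTypicalSequences}. No further analysis of the typicality tolerance $\delta$ is needed, and the bound holds for every $n\in\N$.
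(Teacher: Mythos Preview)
Your argument is correct and is precisely the standard derivation: apply the per-sequence lower bound of Proposition~\ref{prop:TypicalSequences} to the joint variable $(U,W)$, then use that the total probability mass over the typical set is at most one. The paper does not give its own proof of Proposition~\ref{prop:SizeSetTypicalSequences}; it simply cites \cite{ElGammalKim(book)11}, and what you have written is exactly the textbook argument found there.
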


\begin{lemma}\label{lemma:CoveringSetJointDistributions}
We consider a set $\mc{U}$ such that $2\leq|\mc{U}|<+\infty$. For all $\delta>0$, there exists a family of distributions $(\QQ_{U}^k)_{k\in \mc{K}}\subset \inte\Delta(\mc{U})$ with $|\mc{K}|<+\infty$ such that 
\begin{align*}
\Delta(\mc{U})\subset \bigcup_{k\in\mc{K}} T_{\delta}(\QQ_{U}^k),\quad %\label{eq:LemmaCoverSimplex1}
&\min_{k\in \mc{K}}\min_{u\in\mc{U}} \QQ^k(u)\geq \frac{\delta}{4(|\mc{U}|-1)}.\label{eq:LemmaCoverSimplex2}
\end{align*}
\end{lemma}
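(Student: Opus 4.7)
The plan is a short compactness-plus-perturbation argument: cover $\Delta(\mc{U})$ by a finite $L^1$-net, then shift each net point slightly toward the uniform distribution so as to land in the interior of the simplex with a controlled minimum mass, and verify that the shifted net is still a $\delta$-cover.

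First I would use compactness of $\Delta(\mc{U})$ in the $L^1$-norm to produce a finite family $(\PP^k)_{k\in\mc{K}}\subset \Delta(\mc{U})$ that is a $\delta/2$-net: for every $\PP \in \Delta(\mc{U})$, there exists $k\in\mc{K}$ with $\|\PP-\PP^k\|_1\leq\delta/2$. Equivalently, one may intersect $\Delta(\mc{U})$ with a sufficiently fine rational grid in $\R^{|\mc{U}|}$ and keep only the intersection points; either way $|\mc{K}|<+\infty$.

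Next, denote by $U_{\mc{U}}\in\inte\Delta(\mc{U})$ the uniform distribution and set
\begin{align*}
\QQ^{k} \;=\; (1-\alpha)\,\PP^{k} + \alpha\,U_{\mc{U}}, \qquad \alpha \;=\; \frac{\delta\,|\mc{U}|}{4(|\mc{U}|-1)}.
\end{align*}
(Without loss of generality $\delta$ is small enough that $\alpha\leq 1$; otherwise the simplex has diameter at most $\delta$ and a single point covers it.) Each $\QQ^{k}$ is then interior with
\begin{align*}
\QQ^{k}(u) \;\geq\; \alpha\cdot\frac{1}{|\mc{U}|} \;=\; \frac{\delta}{4(|\mc{U}|-1)}, \quad \forall u\in\mc{U},
\end{align*}
which is the required minimum-mass bound. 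For the covering, I would use the elementary estimate $\|\PP-U_{\mc{U}}\|_1\leq 2(|\mc{U}|-1)/|\mc{U}|$ (attained at the vertices of $\Delta(\mc{U})$) to compute
\begin{align*}
\|\PP^{k}-\QQ^{k}\|_1 \;=\; \alpha\,\|\PP^{k}-U_{\mc{U}}\|_1 \;\leq\; \frac{\delta\,|\mc{U}|}{4(|\mc{U}|-1)}\cdot\frac{2(|\mc{U}|-1)}{|\mc{U}|} \;=\; \frac{\delta}{2}.
\end{align*}
The triangle inequality then yields $\|\PP-\QQ^{k}\|_1\leq \|\PP-\PP^{k}\|_1+\|\PP^{k}-\QQ^{k}\|_1\leq\delta$ for the index $k$ provided by the net, so $\PP$ lies in the $\delta$-ball around $\QQ^{k}$.

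There is no serious obstacle; the one delicate point is the calibration of $\alpha$, which must be large enough to guarantee the minimum mass $\delta/(4(|\mc{U}|-1))$ while being small enough that the perturbation eats at most $\delta/2$ of the $L^1$ budget. The assumption $|\mc{U}|\geq 2$ enters exactly here: it makes the factor $|\mc{U}|/(|\mc{U}|-1)$ finite and, combined with the sharp estimate $\|\PP-U_{\mc{U}}\|_1\leq 2(|\mc{U}|-1)/|\mc{U}|$, makes the two $\delta/2$ budgets add up to exactly $\delta$.
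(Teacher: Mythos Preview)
Your argument is correct. Both proofs build a finite $\delta$-net for $\Delta(\mc{U})$ whose points lie in the interior with minimum mass at least $\delta/(4(|\mc{U}|-1))$, but they realise this differently. The paper first pushes each vertex of the simplex inward to the distribution $(1-\delta/4,\,\delta/(4(|\mc{U}|-1)),\ldots)$, then fills in an explicit lattice with step $\delta/(4(|\mc{U}|-1))$ between these shifted vertices, which incidentally yields the concrete bound $|\mc{K}|\le\bigl(4(|\mc{U}|-1)/\delta\bigr)^{|\mc{U}|-1}$. You instead take an abstract $\delta/2$-net by compactness and apply a single barycentric contraction toward the uniform distribution; the calibration of $\alpha$ via the sharp bound $\|\PP-U_{\mc{U}}\|_1\le 2(|\mc{U}|-1)/|\mc{U}|$ is exactly what makes both constraints meet at $\delta/2+\delta/2$. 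Your route is cleaner and avoids the somewhat informal ``lattice that connects'' step, at the cost of losing the explicit cardinality estimate for $|\mc{K}|$---which, however, is never used downstream in the paper.
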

\begin{proof}[Lemma \ref{lemma:CoveringSetJointDistributions}]
We consider a symbols $\tilde{u}\in\mc{U}$ and we define the distributions 
\begin{align*}
\PP_{U} = \begin{cases}
1&\text{ if }U=\tilde{u},\\
0&\text{ otherwise, }
\end{cases}\;\;
\QQ_{U}^{\tilde{u}} = \begin{cases}
1-\frac{\delta}{4}&\text{ if }U=\tilde{u},\\
\frac{\delta}{4(|\mc{U}|-1)}&\text{ otherwise. }
\end{cases}
\end{align*}
Then,
\begin{align}
||\QQ_{U}^{\tilde{u}}- \PP_{U}||_1 =& \sum_{u}|\QQ^{\tilde{u}}(u) - \PP(u)| \nonumber\\
=& \frac{\delta}{4} + \frac{\delta}{4(|\mc{U}|-1)}(|\mc{U}|-1)  = \frac{\delta}{2}<\delta.
\end{align}
This shows that $\PP_{U}\in T_{\delta}(\QQ_{U}^{\tilde{u}})$. The same construction applies to any other symbol $\hat{u} \in\mc{U}$, and this generates a collection of distributions $(\QQ_{U}^{\hat{u}})_{\hat{u}\in\mc{U}}$. We construct a family of distributions $(\QQ_{U}^k)_{k\in \mc{K}}\subset \inte\Delta(\mc{U})$ based on the lattice with steps $\frac{\delta}{4(|\mc{U}|-1)}$ that connects the elements of $(\QQ_{U}^{\hat{u}})_{\hat{u}\in\mc{U}}$. Since $\Delta(\mc{U})\subset [0,1]^{|\mc{U}|-1}$, we have
%the family of distributions has a cardinality that satisfies 
$|\mc{K}| \leq \Big(\frac{4(|\mc{U}|-1)}{\delta}\Big)^{|\mc{U}|-1}<+\infty$.% and moreover  \eqref{eq:LemmaCoverSimplex2} are satisfied.
\end{proof}

\bibliographystyle{IEEEtran}
%\bibliography{/Users/maelletreust/Documents/Redaction/BiblioMael}

\end{document}